\definecolor{light-gray}{gray}{0.9}
\newtheorem{definition}{Definition}%
	\newcommand{\eg}{e.g.,\xspace}
	\newtheorem{lemma}{Lemma}%
	\newtheorem{theorem}{Theorem}%
	\newtheorem{proposition}{Proposition}%
	\newtheorem{corollary}{Corollary}%
	\newtheorem{example}{Example}
\newtheorem{remark}[definition]{Remark}
	\newcommand\eat[1]{}
	\newlength{\wordlength}
	\newcommand{\wordbox}[3][c]{\settowidth{\wordlength}{#3}\makebox[\wordlength][#1]{#2}}
	\newcommand{\set}[1]{\{#1\}}
	\newcommand{\midd}{\mathbin{:}}
	\tikzset{
		inner sep=0pt, outer sep=0pt, minimum size=0pt, thick,
		level/.style={sibling distance = (\columnwidth/16)*2^(4-#1)},
		winner/.style={minimum size=1.5em, circle, draw, fill=white, font={\footnotesize}},
		leaf/.style={inner sep=.15em, font={\footnotesize}},
		ball/.style={minimum size=.4em,circle,fill=black},
		beats/.style={thick,->,>=stealth',draw}
	}
	\newcommand{\pref}{\succ \xspace}
	\newcommand{\Indiff}[1][]{
		\ifthenelse{\equal{#1}{}}{\mathrel I}{\mathop{I_{#1}}}
	}
	\newcommand{\prefset}[1][]{\ifthenelse{\equal{#1}{}}{\mathcal{R}}{\mathcal{R}_{#1}}}
\begin{document}

\title{Manipulating the Probabilistic Serial Rule}
	%\tnotetext[t1]{Thanks!} 
	% \tnotetext[t2]{The second title footnote which is a longer than the first one and with an intention to fill
	% in up more than one line while formatting.}

	\author{Haris Aziz\corref{cor1}} \ead{haris.aziz@nicta.com.au}
	% \address{NICTA and UNSW, Kensington 2033, Australia}
	\author{Serge Gaspers} \ead{serge.gaspers@nicta.com.au}
	% \address{NICTA and UNSW, Kensington 2033, Australia}
	\author{Simon Mackenzie} \ead{Simon.Mackenzie@nicta.com.au}
	% \address{NICTA and UNSW, Kensington 2033, Australia}
	\author{Nicholas Mattei} \ead{Nicholas.Mattei@nicta.com.au}
		\address{NICTA and UNSW, Kensington 2033, Australia}
	% \address{NICTA and UNSW, Kensington 2033, Australia}
	\author{Nina Narodytska} \ead{ninan@cs.cmu.edu}
	\address{Carnegie Mellon University, Pittsburgh, PA 15213-3891, USA}
	\author{Toby Walsh} \ead{toby.walsh@nicta.com.au}
	\address{NICTA and UNSW, Kensington 2033, Australia}
				% \address{NICTA, 223 Anzac Parade, Sydney, NSW 2052, Australia} 

	% \cortext[cor1]{Corresponding author} 

%	\fntext[fn1]{Thanks!} 
%%

\begin{abstract}
	The probabilistic serial (PS) rule is one of the most prominent randomized rules for the assignment problem. It is well-known for its superior fairness and welfare properties. However, PS is not immune to manipulative behaviour by the agents. % We examine computational and non-computational aspects of strategising under the PS rule. 
	We initiate the study of the computational complexity of an agent manipulating the PS rule. 
	We show that computing an expected utility better response is NP-hard. 
	On the other hand, we present a polynomial-time algorithm to  compute a lexicographic  best response. 
	For the case of two agents, we show that even an expected utility best response can be computed in polynomial time. Our result for the case of two agents relies on an interesting connection with sequential allocation of discrete objects. 
	%Finally, we conduct experiments to check the frequency and effect of manipulation under the PS rule for different combinations of the number of agents, objects, and utility functions.
\end{abstract}

	\begin{keyword}
	 	Assignment problem, probabilistic serial mechanism, fair allocation
		
		\emph{JEL}: C62, C63, and C78
	\end{keyword}

\maketitle

		\section{Introduction}

	%BCK11a,
		The \emph{assignment problem} is one of the most fundamental and important problems in economics and computer science~\citep[see \eg][]{ABS13a,AGMW14a,BoMo01a,Gard73b,HyZe79a}. 
		In the setting, agents express preferences over objects and, based on these preferences, the objects are allocated to the agents. The model is applicable to many resource allocation or fair division settings where the objects may be public houses, school seats, course enrollments, kidneys for transplant, car park spaces, chores, joint assets, or time slots in schedules. 
		A randomized or fractional assignment rule takes the preferences of the agents into account in order to allocate each agent a fraction of the object. If the objects are indivisible but allocated in a randomized way, the fraction can also be interpreted as the probability of receiving the object. Randomization is widespread in resource allocation since it is one of the most natural ways to ensure procedural fairness~\citep{BCKM12a}. Randomized assignments have been used to assign public land, radio spectra to broadcasting companies, and US permanent visas to applicants~\citep[Footnote~1, ][]{BCKM12a}.

Among the various randomized/fractional assignment rules, the \emph{probabilistic serial (PS)} rule is one of the most prominent rules~\citep{AzSt14a,BoHe12a,BoMo01a,BCKM12a, KaSe06a,Koji09a, Yilm10a,SaSe13b}.  PS works as follows. 
Each agent expresses a linear order over the set of houses (we use the term house throughout the paper though we stress any object could be allocated with these mechanisms). Each house
		is considered to have a divisible probability weight of one, and agents simultaneously and with the same speed eat the probability weight of their most preferred house. Once a house has been eaten by a subset of agents, these agents proceed to eat their next most preferred house that has not been completely eaten. The procedure terminates after all the houses have been eaten. The random allocation of an agent by PS is the amount of each object he has eaten. Although PS was originally defined for the setting where the number of houses is equal to the number of agents, it can be used without any modification for fewer or more houses than agents~\citep[see \eg][]{BoMo01a,Koji09a}.

		The \emph{probabilistic serial (PS)} rule fares better  than any other random assignment rule in terms of fairness and welfare~\citep{BoHe12a,BoMo01a,BCKM12a,Koji09a,Yilm10a}.
		In particular, it satisfies strong envy-freeness and efficiency with respect to both \emph{stochastic dominance (SD)} and \emph{downward lexicographic (DL)} relations~\citep{BoMo01a,ScVa12a,Koji09a}.
		% relation between lotteries.\footnote{Both are well-known ways to compare fractional allocations and will be defined formally.}
		  SD is one of the most fundamental relations between fractional allocations because one allocation is SD-preferred over another if for every utility function consistent with the ordinal preferences, the former yields at least as much expected utility as the latter. DL is a refinement of SD and based on lexicographic comparisons between fractional allocations.
		Generalizations of the PS rule have been recommended in many settings~\citep[see \eg][]{BCKM12a}.
		The PS rule also satisfies some desirable incentive properties. If the number of objects is at most the number of agents, then PS is weak SD-strategyproof~\citep{BoMo01a}. %PS also does not provide any incentive for agents to truncate their preference list by expressing acceptable objects as unacceptable~\citep{UKK+13a}.
		Another well-established rule \textit{random serial dictator (RSD)} is not envy-free, not as efficient as PS~\citep{BoMo01a} and the fractional allocations under RSD  are \#P-complete to compute~\citep{ABB13b}. However, unlike RSD,  PS is not strategyproof.

		%Although strategic properties of the PS rule have been examined in the past decade~\citep{BoMo01a,KoMa10a,ChKo10a},

In this paper, 	we examine the following natural question for the first time: \emph{what is the computational complexity of an agent computing a different preference to report so as to get a better PS outcome?} This problem of computing the optimal manipulation has already been studied in great depth for voting rules~\citep[see \eg][]{FHH10a}.
%	Additionally, we investigate \emph{how often would a random agent benefit from manipulating a given profile?}
		%We assume that the manipulating agent has full information of the preferences of the other agents.
		  \citet{EkKe12a} showed that when agents are not truthful, the outcome of PS may not satisfy desirable properties related to efficiency and envy-freeness. 
		Hence, it is important to check that even if agents can in principle manipulate, how hard it is to compute a beneficial misreport of their preferences. The complexity of manipulation  of the PS rule is also related to the study of Nash dynamics and better responses. Efficient algorithms to compute best responses can be used to understand Nash dynamics under the mechanism.
		
		% We also consider the situation where \emph{all} agents are strategic.
		% It is also interesting to identify  preference profiles for which no agent has an incentive to unilaterally deviate to gain utility with respect to his actual preference.
		% 	Hence, we consider the following problem: \emph{for a preference profile, does a (pure) Nash equilibrium exist or not and if it exists how efficiently can it be computed?}
		% 	%
		% TODO: MOTIVATE WHY ONE WOULD WANT TO COMPUTE A NASH EQUILIBRIUM.

		In order to compare random allocations, an agent needs to consider relations between them. We consider three well-known relations between random allocations~\citep[see \eg][]{BoMo01a,ScVa12a,SaSe13b,Cho12a}:
		$(i)$ \textit{expected utility (EU)}, $(ii)$ \textit{stochastic dominance (SD)},  and $(iii)$ \textit{downward lexicographic (DL)}.
		For EU, an agent seeks a different allocation that yields more expected utility. % For SD, an agent seeks a different allocation that yields more expected utility for all cardinal utilities consistent with the ordinal preferences. 
		For DL, an agent seeks an allocation that gives a higher probability to the most preferred alternative that has different probabilities in the two allocations.
		Throughout the paper, we assume that agents express \emph{strict} preferences, i.e., they are not indifferent between any two houses. % Some of our results are for the special case of two agents. 	The two-agent case is also of special importance since various disputes arise between two parties.

		%Our contributions are as follows.

		% \paragraph{Contributions}

		\textbf{Contributions. }We initiate the study of computing best responses for the PS mechanism --- one of the most established randomized rules for the assignment problem. 
		The study is additionally motivated by complementing experimental work where we observe that as the number of houses relative to the number of agents grows, the percentage of  manipulable profiles (for which at least one agent has incentive to manipulate) increases, maximizing at around $99\%$. 
		%Relations between various problems---such as computing a better or best response and verifying Nash equilibrium with respect to different lottery extensions---are clarified.
		We present a polynomial-time algorithm to compute the DL best response for multiple agents and houses. 
		%The algorithm works by carefully simulating the PS rule for a sequence of partial preference lists. 
		% We then focus on the two-agent case since various disputes arise between two parties.
		For the case of two agents, we present a polynomial-time algorithm to compute an EU best response for any utilities consistent with the ordinal preferences. The two-agent case is also of special importance since various disputes arise between two parties.
		%In fact, we show that for the case of two agents, any DL best response is an EU best response.
		%we show the algorithm also returns a best response that maximises utility for any utilities consistent with the ordinal preferences.
		The result for the EU best response relies on an interesting connection between the PS rule and the sequential allocation rule for indivisible objects. In a sequential allocation, a picking sequence is specified for the agents and agent get his most preferred available object when his turns comes. For general $n$, we show that computing an  EU best response is NP-hard. The result contrasts sharply with the recent result of \citet{BoLa14a} that a best response can be computed in polynomial time for sequential allocation.

		\section{Preliminaries}

		% TODO: UNIFY NOTATION $\pref$ AND $\succ$. FOR STRICT PREFERENCES $\succ$ SEEMS BETTER.

		An \emph{assignment problem} $(N, H, \pref)$ consists  of a set of agents $N=\{1,\ldots, n\}$, a set of houses $H=\{h_1, \ldots, h_m\}$ and a preference profile $\pref=(\pref_1,\ldots, \pref_n)$ in which $\pref_i$ denotes a complete, transitive and strict ordering on $H$ representing the preferences of agent $i$ over the houses in  $H$.
		%Since each $\pref_i$ will be strict throughout the paper, we will also refer to it simply as $\succ_i$.
			A \emph{fractional assignment} is an $(n\times m)$ matrix $[p(i)(h_j)]_{\substack{1\leq i\leq n, 1\leq j\leq m}}$ such that for all $i\in N$, and $h_j\in H$, $0\leq p(i)(h_j)\leq 1$;  and for all $j\in \{1,\ldots, n\}$, $\sum_{i\in N}p(i)(h_j)= 1$. %$\sum_{h_j\in H}p_{ij}=\sum_{h\in H}w(h)/n$;
		The value $p(i)(h_j)$ is the fraction of house $h_j$ that agent $i$ gets. Each row $p(i)=(p(i)(h_1),\ldots, p(i)(h_m))$ represents the \emph{allocation} of agent $i$.
	A fractional assignment can also be interpreted as a random assignment where $p(i)(h_j)$ is the probability of agent $i$ getting house $h_j$.
	%We will also denote $p(i)(j)$ by $p(i)(h_j)$.

		% \paragraph{Relations between allocations}
			%In this paper, we will reason about lotteries.
			A standard method to compare random allocations is to use the \emph{SD (stochastic dominance)} relation. %.\footnote{Stochastic dominance is equivalent to the \emph{majorization} relation used in mathematics~\citep[see
		%	\begin{definition}
			%  Given two random assignments $p$ and $q$, $p(i) \succsim_i^{SD} q(i)$ i.e.,  a player $i$ \emph{weakly SD~prefers} allocation $p(i)$ to $q(i)$ if
			% $\sum_{h_j\in \set{h_k\midd h_k\pref_i h}}p(i)(h_j) \ge \sum_{h_j\in \set{h_k\midd h_k\pref_i h}}q(i)(h_j) \text{ for all } h\in H.$
			% 		
			 Given two random assignments $p$ and $q$, $p(i) \succ_i^{SD} q(i)$ i.e.,  a player $i$ \emph{SD~prefers} allocation $p(i)$ to $q(i)$ if
			$\sum_{h_j\in \set{h_k\midd h_k\pref_i h}}p(i)(h_j) \ge \sum_{h_j\in \set{h_k\midd h_k\pref_i h}}q(i)(h_j)$  for all  $h\in H$ and 
			$\sum_{h_j\in \set{h_k\midd h_k\pref_i h}}p(i)(h_j) > \sum_{h_j\in \set{h_k\midd h_k\pref_i h}}q(i)(h_j) \text{ for some } h\in H.$

			Given two random assignments $p$ and $q$, $p(i) \pref_i^{DL} q(i)$ i.e.,  a player $i$ \emph{DL~(downward lexicographic) prefers}  allocation $p(i)$ to $q(i)$ if $p(i)\neq q(i)$ and for the most preferred house $h$ such that $p(i)(h)\neq q(i)(h)$, we have that $p(i)(h)>q(i)(h)$.

				When agents are considered to have cardinal utilities for the objects, we denote by $u_i(h)$ the utility that agent $i$ gets from house $h$. We will assume that the total utility of an agent equals the sum of the utilities that he gets from each of the houses. Given two random assignments $p$ and $q$, $p(i) \pref_i^{EU} q(i)$ i.e.,  a player $i$ \emph{EU (expected utility)~prefers} allocation $p(i)$ to $q(i)$ if
			$\sum_{h\in H}u_i(h) \cdot p(i)(h)> \sum_{h\in H}u_i(h) \cdot q(i)(h).$
			%\end{definition}

		Since for all $i\in N$, agent $i$ compares assignment $p$ with assignment $q$ only with respect to his allocations $p(i)$ and $q(i)$, we will sometimes abuse the notation and use $p\pref_i^{SD} q$ for $p(i)\pref_i^{SD} q(i)$.
		A \emph{random assignment rule} takes as input an assignment problem $(N,H,\pref)$ and returns a random assignment which specifies what fraction or probability of each house is allocated to each agent.

		% We will study the PS rule, we will assume that each agent $i\in N$ expresses linear order $\pref_i$ over $H$.

		% \begin{definition}
		% A \emph{random assignment mechanism} $f$ is a function which for each input $(N,H,\pref)$ returns a fractional assignment $p$. We say that a random assignment mechanism $f$ satisfies property $P$ if every assignment $p$ returned by $f$ satisfies $P$.\end{definition}
		%

		\section{The Probabilistic Serial Rule and its Manipulation}
The \emph{Probabilistic Serial (PS) rule} is a random assignment algorithm in which we consider each house as infinitely divisible~\citep{BoMo01a,Koji09a}.
		At each point in time, each agent is eating (consuming the probability mass of) his most preferred house that has not been completely eaten and each agent eats at the same unit speed. Hence all the houses are eaten at time $m/n$ and each agent receives a total of $m/n$ units of houses.
		The probability of house $h_j$ being allocated to $i$ is the fraction of house $h_j$ that $i$ has eaten.
 % The PS fractional assignment can be computed in time $O(mn)$.
 % 		We refer the reader to \citep{BoMo01a} or \citep{Koji09a} for alternative definitions of PS.
 The following example adapted from \citep[Section 7, ][]{BoMo01a} shows how PS works.

		\begin{example}[PS rule]\label{example:PS}
			Consider an assignment problem with the following preference profile.
		\begin{align*}
			\succ_1:\quad& h_1,h_2,h_3 & \succ_2:\quad& h_2,h_1,h_3&	\succ_3:\quad& h_2,h_3,h_1
			% \succ_2:\quad& h_2,h_1,h_3\\
			%\succ_3:\quad& h_2,h_3,h_1
			\end{align*}
			Agents $2$ and $3$ start eating $h_2$ simultaneously whereas agent $1$ eats $h_1$. When $2$ and $3$ finish $h_2$, agent $1$ has only eaten half of $h_1$.  The timing of the eating can be seen below.
		\begin{center}
		             \begin{tikzpicture}[scale=0.2]
		                 \centering
		                 \draw[-] (0,0) -- (0,6);
		                 \draw[-] (0,0) -- (20,0);

		                 \draw[-] (20,6) -- (20,0);

		\draw[-] (0,2) -- (20,2);
		\draw[-] (0,4) -- (20,4);
		\draw[-] (20,0) -- (20,6);

		\draw[-] (10,0) -- (10,6);

		\draw[-] (0,6) -- (20,6);

		\draw[-] (15,0) -- (15,6);

		                                        \draw (0,-.8) node(c){\small $0$};
		                             \draw (20/2,-1.2) node(c){\small $\frac{1}{2}$};

		 \draw (20/2,-2.5) node(c){\small Time};

		                             \draw (20,-1) node(c){\small$1$};

		\draw (15,-1.2) node(c){\small$\frac{3}{4}$};

		             %    \draw(-.7,15) node(z){\small $10$};
		    \draw(-3,6) node(z){\small Agent $1$};
		                 \draw(-3,4) node(z){\small Agent $2$};
		                 \draw(-3,2) node(z){\small Agent $3$};

		\draw(5,6.8) node(z){\small $h_1$};

		\draw(5,4.8) node(z){\small $h_2$};

		\draw(5,2.8) node(z){\small $h_2$};

		\draw(12.5,6.8) node(z){\small $h_1$};

		\draw(12.5,4.8) node(z){\small $h_1$};

		\draw(12.5,2.8) node(z){\small $h_3$};

		\draw(17.5,6.8) node(z){\small $h_3$};

		\draw(17.5,4.8) node(z){\small $h_3$};

		\draw(17.5,2.8) node(z){\small $h_3$};
		  \end{tikzpicture}
		 %       \] 
		 %\caption{Illustration of PS with agents eating their preferred objects over time for the preference profile in Example~\ref{example:PS}.}
		%\label{figure:muli-ps}
		% \end{figure}
		\end{center}

		\noindent
			The final allocation computed by PS is
		\[
		PS(\succ_1,\succ_2,\succ_3)=\begin{pmatrix}
			3/4&0&1/4\\
		  1/4&1/2& 1/4 \\
		  0&1/2 &  1/2

		 \end{pmatrix}.
		\]

		\end{example}

		% We write the formal definition of PS from \citep{Koji09a} as an algorithm. For any $h\in H'\subset H$, let $N(h,H')=\{i\in N\midd a\succsim_i b \text{  for every } b\in H'\}$ be the set of agents whose most preferred house in $H'$ is $h$. PS is defined by the following steps:
		%
		%
		% 	\begin{algorithm}[h!]
		% 	  \caption{PS}
		% 	  \label{PS}
		% %	\small
		% 	\renewcommand{\algorithmicrequire}{\wordbox[l]{\textbf{Input}:}{\textbf{Output}:}}
		% 	 \renewcommand{\algorithmicensure}{\wordbox[l]{\textbf{Output}:}{\textbf{Output}:}}
		% 	\begin{algorithmic}
		% 		\REQUIRE $(N,H,\succ)$
		% 		\ENSURE $p$ the random assignment returned by PS
		% 	\end{algorithmic}
		% 	\algsetup{linenodelimiter=\,}
		% 	  \begin{algorithmic}[1]
		% 		
		% \STATE $s\longleftarrow 0$ ($s$ is the stage of the algorithm)
		% \STATE $H^O\longleftarrow H$; $t^0\longleftarrow 0$; $p_{ih}^0\longleftarrow 0$ for all $i\in N$ and $h\in H$.
		%
		% \WHILE{$H^s\neq \emptyset$}
		% \STATE $t^{s+1}(h)=\sup\{t\in [0,|H|]\midd \sum_{i\in N} p_{ih}^s+|N(h,H^s)|(t-t^s)<1\}$
		% \STATE $t^{s+1}\longleftarrow \min_{h\in H^s}t^s(h)$
		% \STATE $H^s=H^s\setminus \{h\in H^{s-1}\midd t(h)=t^s\}$
		% \FOR{all $i\in N$ and $h\in H$}
		% \IF{$ i\in N(h,H^s)$}
		% \STATE  $p_{ih}^{s+1} \longleftarrow  p_{ih}^{s}+t^{s+1}-t(s) $
		% \ELSE
		% \STATE $ p_{ih}^{s+1} \longleftarrow  p_{ih}^{s} $
		% \ENDIF
		% \ENDFOR
		%
		%  \STATE $s\longleftarrow s+1$
		% \ENDWHILE	
		%
		%
		% \RETURN $p=p^s$
		%
		%
		% 	 \end{algorithmic}
		% 	\label{algo:subroutine}
		% 	\end{algorithm}

			Consider the assignment problem in Example~\ref{example:PS}. If agent $1$ misreports his preferences as follows: $\succ_1':\quad h_2,h_1,h_3,$ then \[
		PS(\succ_1',\succ_2,\succ_3)=\begin{pmatrix}
			1/2&1/3&1/6\\
		  1/2 & 1/3 & 1/6 \\
		  0 & 1/3 & 2/3
			\end{pmatrix}.
		\]
		\noindent
		Then, if $u_1(h_1)=7$, $u_1(h_2)=6$, and $u_1(h_3)=0$, then agent $1$ gets more expected utility when he reports $\succ_1'$. In the example, although truth-telling is a DL best response, it is not necessarily  an EU best response for agent $1$.

		Examples 1 and 2 of \citet{Koji09a} show that manipulating the PS mechanism can lead to an SD improvement when each agent can be allocated more than one house. In light of the fact that the PS rule can be manipulated, we examine the complexity of a single agent computing a manipulation, in other words, the best response for the PS rule.\footnote{Note that if an agent is risk-averse and does not have information about the other agent's preferences, then his maximin strategy is to be truthful. The reason is that if all agents have the same preferences, then the optimal strategy is to be truthful.} 
		For a preference profile $\pref$, we denote by $(\pref_{-i},\pref_i')$ the preference profile obtained from $\pref$ by replacing agent $i$'s preference by $\pref_i'$.
		For $\mathcal{E}\in \{SD, EU, DL\}$, we define the problem \textsc{$\mathcal{E}$-BR}: Given $(N,H,\pref)$,
				 compute a preference $\pref_1'$ for agent $1$  such that there exists no preference $\pref_1''$ such that $PS(N,H,(\pref_1'',\pref_{-1})) \succ_1^{\mathcal{E}} PS(N,H,(\pref_1',\pref_{-1}))$.
				 
		For a constant $m$, the problem \textsc{$\mathcal{E}$-BR} can be solved by brute force by trying out each of the $m!$ preferences. Hence we will not assume that $m$ is a constant.

		% We will denote the eating starting time of some house $h$ by $t_s(h, (N,H,\pref))$.
		We establish some more notation and terminology for the rest of the paper.
		We will often refer to the PS outcomes for partial lists of houses and preferences.
		We will denote by $PS(\pref_i^{L}, \pref_{-i})(i)$, the allocation that agent $i$ receives when
		his preference is according to ordered list $L$.
		Note that preferences and ordered lists are interchangeable, except that a list need not contain all houses in $H$.
		When an agent runs out of houses
		in his preference list, he stops eating.
		The \emph{length} of a list $L$ is denoted $|L|$, and we refer to the $k$th house in $L$ as $L(k)$.
		In the PS rule, the \emph{eating start time} of a house is the time point at which the house starts to be eaten by some agent. In Example~\ref{example:PS}, the eating start times of $h_1,h_2$ and $h_3$ are $0,0$ and $0.5$, respectively.

		%\section{Computing a lexicographic best response}
		\section{Lexicographic best response}
		\label{sec:dl}

		In this section, we present a polynomial-time algorithm for \textsc{DL-BR}. Lexicographic preferences are well-established in the assignment literature~\citep[see \eg][]{SaSe13b,ScVa12a,Cho12a}. %and date back to \citep{Hau54a}. 
	Let $(N,H,\pref)$ be an assignment problem where $N=\{1,\dots,n\}$ and $H=\{h_1,\dots,h_m\}$. We will show how to compute a DL best response for agent $1\in N$. It has been shown that when $m\leq n$, then truth-telling is the DL best response but if $m>n$, then this need not be the case~\citep{SaSe13b,ScVa12a,Koji09a}.

		Recall that a preference $\succ_1'$ is a DL best response for agent 1 if the fractional allocation agent 1 receives by reporting $\succ_1'$ is DL preferred to any fractional allocation agent 1 receives by reporting another preference.
		That is, there is no preference $\succ_1''$ such that his share of a house $h$ when reporting $\succ_1''$ is strictly larger than when reporting $\succ_1'$ while the share of all houses he prefers to $h$ (according to his true preference $\succ_1$) is the same whether reporting $\succ_1'$ or $\succ_1''$.

		Our algorithm will iteratively construct a partial preference list for the $i$ most preferred houses of agent 1.
		Without loss of generality, denote
		%\begin{align*}
		 $\succ_1: h_1, h_2, \dots, h_m.$%\enspace.
		%\end{align*}

		For any $i, 1\le i\le m$, denote $H_i = \{h_1, \dots, h_i\}$.
		A preference of agent 1 \emph{restricted} to $H_i$ is a preference over a subset of $H_i$.
	%	Note that a preference for $H_i$ need not list all the houses in $H_i$.
		For the preference of agent 1 restricted to $H_i$, the PS rule computes an allocation where the preference of agent 1 is replaced with this preference and the preferences of all other agents remain unchanged.
		%Recall that agent 1 can only be allocated a non-zero fraction of a house if this house is in the preference list he submits.
		The notions of DL best response and DL preferred fractional assignments with respect to a subset of houses $H_i$ are defined accordingly for restricted preferences of agent 1.
		
			\begin{algorithm}[h!]
				\caption{DL best response for $n$ agents}
				\label{algo:DL-BR}
				\renewcommand{\algorithmicrequire}{\wordbox[l]{\textbf{Input}:}{\textbf{Output}:}}
				\renewcommand{\algorithmicensure}{\wordbox[l]{\textbf{Output}:}{\textbf{Output}:}}
				\renewcommand{\algorithmiccomment}[1]{\hfill // #1}
	 \scriptsize 
		%\footnotesize
				\begin{algorithmic}
					\REQUIRE $(N,H,\succ)$ 
					\ENSURE DL Best response of agent $1$
				\end{algorithmic}
				\algsetup{linenodelimiter=\,}
				\begin{algorithmic}%[1]
						% \footnotesize
					 \scriptsize 
				%	\footnotesize
					\STATE $L_1 \leftarrow h_1$ \COMMENT{Best response for agent 1 w.r.t. $H_1 = \{h_1\}$}
					\FOR[Compute a best response w.r.t. $H_2, \dots, H_n$]{$i=2$ to $n$}
					  \STATE $p \leftarrow 0$ 	
					  \IF{$\exists q\in \{1,\dots,i-1\}$ such that $0<PS1(L_{i-1},L_{i-1}(q))<1$}
					    \STATE $p \leftarrow \max\{q\in \{1,\dots,i-1\} : 0<PS1(L_{i-1},L_{i-1}(q))<1\}$
					  \ENDIF
					  \FOR[New house $h_i$ inserted after position $p$]{$q \leftarrow p+1$ to $|L_i|+1$}
					    \STATE $L_i^q \leftarrow L_{i-1}(1) \oplus \dots \oplus L_{i-1}(q-1) \oplus h_i$
					    \WHILE[Complete the list according to the stingy ordering]{$|L_i^q| \le |L_{i-1}|$}
					      \STATE $est \leftarrow$ \textbf{EST}$(N,H,(L_i^q, \succ_2, \dots, \succ_n))$
					      \STATE $S \leftarrow \{h\in L_{i-1} \setminus L_i^q : est(h) \text{ is minimum}\}$
					      \STATE $h_s \leftarrow $ first house among $S$ in $\succ_1$
					      \STATE $L_i^q \leftarrow L_i^q \oplus h_s$
					    \ENDWHILE
					    \IF{$PS1(L_i^q,h_i)=0$}
					      \STATE $L_i^q \leftarrow L_{i-1}$
					    \ENDIF
					  \ENDFOR				  
					  \STATE $q\leftarrow p$
					  \COMMENT{Determine which $L_i^q$ is stingy}
					  \STATE $\mathsf{worse}[p-1] \leftarrow \mathsf{true}$
					  \STATE $\mathsf{finished} \leftarrow \mathsf{false}$
					  \WHILE{$\mathsf{finished}=\mathsf{false}$}
					    \IF{$\exists h\in H_{i-1}$ such that $PS1(L_i^q,h) \ne PS1(L_{i-1},h)$}
					      \STATE $\mathsf{worse}[q] \leftarrow \mathsf{true}$
					      \STATE $q \leftarrow q+1$
					    \ELSE
					      \STATE $\mathsf{worse}[q] \leftarrow \mathsf{false}$
					      \IF{$PS1(L_i^q,h_1)>0$ \AND $PS1(L_i^q,h_1)<1$}
					        \IF{$\mathsf{worse}[q-1] = \mathsf{false}$}
					          \STATE $q \leftarrow q-1$
					        \ENDIF
					        \STATE $\mathsf{finished} \leftarrow \mathsf{true}$
					      \ELSIF{$PS1(L_i^q,h_1)=1$}
					        \STATE $est \leftarrow$ \textbf{EST}$(N,H,(L_i^q(1) \oplus \dots \oplus L_i^q(q-1), \succ_2, \dots, \succ_n))$
					        \IF{$\exists h\in \{L_i^q(q+1), \dots, L_i^q(|L_i^q|)\}$ such that $est(h)\le est(h_i)$}
					          \STATE $q \leftarrow q+1$
					        \ELSE
					          \STATE $\mathsf{finished}=\mathsf{true}$
					        \ENDIF
					      \ENDIF
					    \ENDIF
					  \ENDWHILE
					  \STATE $L_i \leftarrow L_i^q$
					\ENDFOR
					\RETURN $L_n$
				\end{algorithmic}
				\end{algorithm}
				
				% \clearpage

		For a house $h\in H$, let $PS1(L,h) = (PS(\pref_1^L,\pref_{-1})(1))(h)$ denote the fraction of house $h$ that the PS rule assigns to agent 1 when he reports the (partial) preference $L$.
	We start with a simple lemma showing that a DL best response for agent 1 for the whole set $H$ can be no better and no worse on $H_i$ than a DL best response for $H_i$.

		\begin{lemma}\label{lem:eq-ass}
		 Let $i\in\{1,\dots,m\}$. A DL best response for agent 1 on $H$ gives the same fractional assignment to the houses in $H_i$ as a DL best response for agent 1 on $H_i$.
		\end{lemma}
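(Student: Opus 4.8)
The plan is to prove the claim by a two-sided argument, showing that an unrestricted DL best response, when restricted to $H_i$, is itself a DL best response for the restricted problem, and conversely that a restricted DL best response can be extended to the full set without changing the allocation on $H_i$. The key observation underlying both directions is that the eating dynamics of the PS rule on the houses in $H_i$ are unaffected by how agent $1$ ranks the houses in $H\setminus H_i$ below all of $H_i$: once agent $1$ has eaten all the portions of houses in $H_i$ that he will ever eat, whatever he does afterwards with $H\setminus H_i$ is invisible to the run restricted to $H_i$. So the first step is to make this precise: if $\succ_1'$ ranks the houses of $H_i$ in some order $\sigma$ before all other houses, then $PS1(\succ_1',h)$ for $h\in H_i$ depends only on $\sigma$ (and on the fixed preferences $\succ_{-1}$), and equals the corresponding value in the restricted PS run where agent $1$'s list is exactly $\sigma$. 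This is essentially because agent $1$'s eating behaviour on $H_i$ is the same in both runs up to the time he finishes his last house in $H_i$, and the other agents' behaviour on $H_i$ only depends on which houses of $H_i$ are available, which is driven by the same dynamics.

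Given that lemma, the forward direction goes as follows. Let $\succ_1^\ast$ be a DL best response for agent $1$ on $H$. Suppose for contradiction that its restriction-induced allocation on $H_i$ is not DL-optimal among restricted preferences, i.e. there is a preference $\tau$ over a subset of $H_i$ that does strictly DL-better on $H_i$. Then I would form a new full preference for agent $1$ by listing the houses of $H_i$ in the order induced by $\tau$ (placing any houses of $H_i$ not in $\tau$ arbitrarily afterwards but still above $H\setminus H_i$), followed by the houses of $H\setminus H_i$ in the same relative order as in $\succ_1^\ast$. By the first step, this new preference achieves exactly the improved allocation on $H_i$, hence is strictly DL-better than $\succ_1^\ast$ overall (a strict DL improvement on a prefix $H_i$ of the true order is a strict DL improvement on $H$, regardless of what happens on $H\setminus H_i$). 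This contradicts optimality of $\succ_1^\ast$.

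For the converse direction, let $\tau^\ast$ be a DL best response on $H_i$; I want to show some DL best response on $H$ induces the same allocation on $H_i$. Take any DL best response $\succ_1^\ast$ on $H$; by the forward direction its restriction to $H_i$ is DL-optimal for the restricted problem, hence (since DL is a total preorder on the relevant finite set of achievable allocation vectors on $H_i$, and DL-optimal allocations are unique as vectors) gives the same fractional allocation on $H_i$ as $\tau^\ast$. That is exactly the statement. The main obstacle I anticipate is the first step — rigorously justifying that the PS dynamics on $H_i$ are "oblivious" to the tail of agent $1$'s list — which requires a careful induction over the sequence of eating-completion events, tracking that at every such event the set of not-yet-finished houses within $H_i$, and each agent's current target within $H_i$, coincide in the two runs; the interaction is subtle because agent $1$ might, in the full run, be eating a house outside $H_i$ at a moment when in the restricted run he has already stopped, but this only affects houses outside $H_i$ and so does not perturb the $H_i$-restricted picture. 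Everything else is bookkeeping about the definition of DL on a prefix of the true order.
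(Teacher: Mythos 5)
Your overall decomposition is reasonable, and two of its three ingredients are sound: the ``obliviousness'' claim (agent 1's shares of the houses in $H_i$ are fixed by the time he finishes the last $H_i$-house on his list, so the tail of the list cannot affect them) and the observation that a strict DL improvement on the prefix $H_i$ of the true order is a strict DL improvement on all of $H$. Together these give one of the two inequalities: no restricted preference on $H_i$ can strictly DL-beat, on $H_i$, the allocation achieved by a DL best response on $H$ (your forward direction). Note in passing that your extension of $\tau$ may hand agent 1 positive shares of the houses in $H_i\setminus\tau$ that you insert before the tail, so it achieves an allocation that is DL at least as good as, not ``exactly'', the improved one; this is harmless for the contradiction but should be stated correctly.

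The gap is the reverse inequality, which your converse direction silently assumes. To conclude that the two allocations on $H_i$ coincide, you must also show that a full preference cannot do strictly DL-better on $H_i$ than every restricted preference --- that is, that interleaving houses from $H\setminus H_i$ among the houses of $H_i$ in agent 1's report never improves his allocation on $H_i$. Your converse paragraph deduces that the restriction of the full best response is ``DL-optimal for the restricted problem'' from the forward direction, but the forward direction only shows that its allocation on $H_i$ is not strictly dominated by a restricted-preference allocation; it does not show that this allocation is itself \emph{achievable} by a restricted preference, which is what optimality for the restricted problem requires. The missing step is a genuine monotonicity claim about the PS dynamics: deleting the non-$H_i$ houses from agent 1's list (from its interior, not merely from its tail, which is all your obliviousness lemma covers) weakly DL-improves his allocation on $H_i$ --- intuitively because eating some $g\notin H_i$ both delays agent 1's arrival at his remaining $H_i$-houses and can only accelerate the other agents by exhausting $g$ sooner. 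Establishing this requires the same kind of event-by-event coupling argument you sketch for the tail case, but for deletions in the middle of the list, and it is the substantive content of the lemma; without it the proof is incomplete.
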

		% \begin{proof}
		%  We have that a preference for agent 1 on $H_i$ can be extended to a preference for all houses that gives the same fractional allocation to agent 1 for the houses in $H_i$.
		%  Namely, the remaining houses $H\setminus H_i$ can be appended to the end of his preference list, giving the same allocation to the houses in $H_i$ as before.
		% 
		%  On the other hand, consider a DL best response $\succ_1'$ for agent 1 on $H$, giving a fractional allocation $p$ to agent 1.
		%  Restricting this preference to $H_i$ gives a fractional allocation $q$ for $H_i$.
		%  If $q$ is DL preferred to $p_{|H_i}$, i.e., the fractional allocation $p$ restricted to $H_i$, then $q=p_{|H_i}$, otherwise we would have a contradiction to $\succ_1'$ being a DL best response as per the previous argument that we can extend any preference for $H_i$ to $H$ giving the same fractional allocation to agent 1 for the houses in $H_i$.
		% \end{proof}

		\noindent
		Our algorithm will compute a list $L_i$ such that $L_i \subseteq H_i$.\footnote{When we treat a list as a set we refer to the set of all elements occurring in the list.}
		The list $L_i$ will be a DL best response for agent 1 with respect to $H_i$.
		Suppose the algorithm has computed $L_{i-1}$.
		Then, when considering $H_i = H_{i-1} \cup \{h_i\}$, it needs to make sure that the new fractional allocation restricted to the houses in $H_{i-1}$ remains the same (due to Lemma \ref{lem:eq-ass}).
		For the preference to be optimal with respect to $H_i$, the algorithm needs to maximize the fractional allocation of $h_i$ to agent 1 under the previous constraint.

		% \newpage

		Our algorithm will compute a canonical DL best response that has several additional properties.
	%	\begin{definition}
		 A preference $L_i$ for $H_i$ is \emph{no-$0$} if $L_i$ contains no house $h$ with $PS1(L_i,h)=0$.
	%	\end{definition}
%	\noindent
		Any DL best response for agent $1$ for $H_i$ can be converted into a no-$0$ DL best response by removing the houses for which agent 1 obtains a fraction of $0$.
%	\begin{definition}
		 For a no-$0$ preference $L_i$ for $H_i$,
		 the \emph{stingy ordering} for a position $j$
		 is determined by running the PS rule with the preference $L_i(1) \oplus \dots \oplus L_i(j-1)$ for agent 1, where $\oplus$ denotes concatenation.
		 It orders the houses from $\bigcup_{k=j}^{|L_i|} L_i(k)$ by increasing eating start times, and when two houses $h,h'$ have the same eating start time, we order $h$ before $h'$ iff $h\succ_1 h'$.
	%	\end{definition}
Intuitively, houses occurring early in this ordering are the most threatened by the other agents at the time point when agent 1 comes to position $j$.
		The following definition takes into account that the eating start times of later houses may change depending on agent 1's ordering of earlier houses.

	%	\begin{definition}
		 A preference $L_i$ for $H_i$ is \emph{stingy} if it is a no-$0$ DL best response for agent 1 on $H_i$, and
		 for every $j\in\{1,\dots,i\}$, $L_i(j)$ is the first house in the stingy ordering for this position such that there exists a DL best response starting with $L_i(1) \oplus \dots \oplus L_i(j)$.
	%	\end{definition}
		We note that, due to Lemma \ref{lem:eq-ass}, there is a unique stingy preference for each $H_i$.
	%	\noindent
		% We note that, due to Lemma \ref{lem:eq-ass}, there is a unique stingy preference for each $H_i$.

		\begin{example}
		 Consider the following assignment problem.
		 \begin{align*}
			\succ_1: h_1, h_2, h_3, h_4, h_5, h_6&&
			\succ_2: h_3, h_6, h_4, h_5, h_1, h_2
		 \end{align*}
		 The preferences $h_3, h_1, h_4, h_2$ and $h_3, h_2, h_4, h_1$ are both no-$0$ DL best responses for agent 1 with respect to $H_4$, allocating $p(1)(h_1) = 1, p(1)(h_2)=1, p(1)(h_3) = 1/2, p(1)(h_4)=1/2$ to agent 1. When running the PS rule with $h_3$ as the preference list, $h_4$'s eating start time comes first among $\{h_1,h_2,h_4\}$. However, there is no DL best response for $H_4$ starting with $h_3,h_4$. The next house in the stingy ordering is $h_1$. The preference $h_3, h_1, h_4, h_2$ is the stingy preference for $H_4$.
		\end{example}

		\noindent
		The next lemma shows that when agent 1 receives a house partially (a fraction different from 0 and 1) in a DL best response, a stingy preference would not order a less preferred house before that house.

		\begin{lemma}\label{lem:before-half}
		 Let $L_{i}$ be a stingy preference for $H_i$.
		 Suppose there is a $h_j\in H_i$ such that $0<PS1(L_{i},h_j)<1$.
		 Then, $P \subseteq H_j$, where $L_i = P \oplus h_j \oplus S$.
		\end{lemma}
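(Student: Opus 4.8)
The plan is to argue by contradiction: suppose $L_i = P \oplus h_j \oplus S$ is stingy, $0<PS1(L_i,h_j)<1$, but some house $h_k$ with $h_j \succ_1 h_k$ (so $k>j$) occurs in $P$. I would pick the \emph{last} such "out-of-order" house $h_k$ appearing in $P$ — that is, the house $h_k\in P$ with $h_j\succ_1 h_k$ that occupies the largest position, say position $t$, in $L_i$. Let $L_i = P' \oplus h_k \oplus Q \oplus h_j \oplus S$, so that every house in $Q$ (which lies between $h_k$ and $h_j$ in the list) is preferred to $h_j$ by $\succ_1$.

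The key structural fact I would establish is that $h_j$ is partially eaten (fraction strictly between $0$ and $1$) precisely because at the time agent $1$ arrives at position $|P|+1$ in the list, $h_j$ has already been partly consumed by other agents, and agent $1$ runs out of time or the house gets finished by others — i.e.\ $h_j$'s eating start time (in the run with list $L_i$) is strictly positive and agent $1$ reaches it late. Now consider swapping: move $h_k$ from position $t$ to immediately after $h_j$, obtaining $L_i' = P' \oplus Q \oplus h_j \oplus h_k \oplus S$. I would show (i) removing $h_k$ from its early position does not decrease agent $1$'s consumption of any house in $H_{j}$ that agent~$1$ currently gets — the houses in $Q$ are all $\succ_1 h_j$, hence were already fully available when agent~$1$ got to them, or agent~$1$ was sharing them, and in either case shifting them one position earlier (because $h_k$ left) can only help or leave unchanged the fractions of the houses agent~$1$ prefers to $h_j$; and (ii) the freed-up eating time that agent~$1$ previously spent on $h_k$ is now spent on $h_j$ (and then on $h_k$ placed after $h_j$), so agent~$1$'s fraction of $h_j$ strictly increases, since before the swap agent $1$ spent positive time on $h_k$ before $h_j$ but got a fraction of $h_j$ bounded away from $1$. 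This contradicts that $L_i$ is a DL best response: $L_i'$ agrees with $L_i$ on all houses of $H_{j-1}$ (by (i), after noting that the allocation to houses strictly preferred to $h_j$ cannot change in a DL best response, using Lemma~\ref{lem:eq-ass} applied at level $j-1$) but gives strictly more of $h_j$.

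The main obstacle I expect is step (i): carefully verifying that relocating $h_k$ does not disturb the fractions of the houses agent~$1$ prefers to $h_j$. The subtlety is that moving $h_k$ changes the eating schedule of the \emph{other} agents too, through the houses in $Q$ and their interaction with $\succ_2,\dots,\succ_n$; one has to track how eating start times of houses in $Q$ shift. The clean way to handle this is to note that since $L_i$ is stingy and $h_j$ is only partially consumed, Lemma~\ref{lem:eq-ass} forces the fractional assignment on $H_{j-1}$ to equal that of a DL best response on $H_{j-1}$, which depends only on the relative order of $L_i$'s houses that lie in $H_{j-1}$; relocating $h_k$ (an element \emph{not} in $H_{j-1}$, since $k>j$) past the block $Q$ preserves this relative order, hence preserves the whole run up to the moment agent~$1$ finishes the $H_{j-1}$-houses in its list. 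From that moment on, the only change is that agent~$1$'s remaining eating time, previously allocated partly to $h_k$ before $h_j$, is instead allocated to $h_j$ first; a direct comparison of the two runs from that common intermediate configuration gives the strict increase in $PS1(\cdot,h_j)$ and no change on $H_{j-1}$, completing the contradiction.
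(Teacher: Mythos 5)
The paper omits the proof of this lemma from the present version, so your argument has to stand on its own. Your overall plan is the natural one: pick the last house $h_k$ with $h_j \succ_1 h_k$ occurring before $h_j$, relocate it to just after $h_j$, and reduce the contradiction to two claims --- (i) the allocation on $H_{j-1}$ is unchanged and (ii) the share of $h_j$ strictly increases --- noting correctly via Lemma~\ref{lem:eq-ass} that the $H_{j-1}$-allocation cannot strictly improve. That is the right target, and identifying it is most of the conceptual work.

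The gap is in your justification of (i). The assertion that relocating $h_k$ ``preserves the whole run up to the moment agent~1 finishes the $H_{j-1}$-houses,'' and the backup principle that agent~1's fractional assignment on $H_{j-1}$ depends only on the relative order of the $H_{j-1}$-houses within his list, are both false as general statements about PS. Since $L_i$ is no-$0$, agent~1 spends strictly positive time eating $h_k$ before reaching $Q$; in the modified run he spends that time on $Q$ instead, and from that moment the two executions diverge for \emph{all} agents: agents contending for $h_k$ reach their next houses at different times, $Q(1)$ is exhausted at a different time, and these perturbations cascade through the eating start times and exhaustion times of every house in $Q$. A minimal illustration with three agents and $\succ_2: h_l,\dots$ and $\succ_3: h_k,\dots$: if agent~1 reports $h_k, h_l$ he receives $1/2$ of $h_k$ and only $1/4$ of $h_l$, whereas reporting $h_l, h_k$ he receives $1/2$ of $h_l$ --- so the share of the preferred house $h_l$ is not determined by its position among the $H_{j-1}$-houses alone. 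In your setting the dangerous direction is the opposite one: if moving $h_k$ later causes agent~1's share of some house of $Q$ to \emph{decrease} (which your argument does not exclude --- another agent may now reach a $Q$-house earlier because some house upstream in \emph{his} list is exhausted sooner), then $L_i'$ is DL-\emph{worse} than $L_i$ and no contradiction follows. Closing this requires an actual monotonicity analysis of how eating start times and exhaustion times respond to the modification; that analysis, not the order-invariance principle you appeal to, is the technical heart of the lemma. Step (ii) suffers from the same issue in milder form: even granting that agent~1 reaches $h_j$ strictly earlier, the strict increase in his share of $h_j$ requires controlling $h_j$'s exhaustion time and the other agents' arrival times at $h_j$ across the two runs.
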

		% \begin{proof}
		% 	 For the sake of contradiction, assume $P$ contains a house $h_k$ such that $h_j \succ_1 h_k$ (i.e., $j<k$).
		% 	 Let $K$ denote all houses $h_k$ in $P$ such that $h_j \succ_1 h_k$.
		% 	 Since $L_i$ is no-$0$, $PS1(L_i,h_k)>0$ for all $h_k \in K$.
		% 	 But then, removing the houses in $K$ from $L_i$ gives a preference that is strictly DL preferred to $L_i$ since this increases agent 1's share of $h_j$ while only the shares of less preferred houses decrease.
		% 	 This contradicts $L_i$ being a DL best response for $H_i$, and therefore proves the lemma.
		% 	\end{proof}

		\noindent
		The next lemma shows how the houses allocated completely to agent 1 are ordered in a stingy preference.

		\begin{lemma}\label{lem:ones-same-order}
		 Let $L_i$ be a stingy preference for $H_i$.
		 If $h_j,h_k \in H_i$ are two houses such that
		 $PS1(L_i,h_j)=PS1(L_i,h_k)=1$, with $L_i = P\oplus h_j \oplus M \oplus h_k \oplus S$, then either the eating start time of $h_j$ is smaller than $h_k$'s eating start time when agent 1 reports $P$, or it is the same and $h_j \succ_1 h_k$.
		\end{lemma}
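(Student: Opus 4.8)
The plan is to argue by contradiction, supposing that $h_k$ is a house allocated completely to agent $1$ in the stingy preference $L_i$, that $h_j$ comes before $h_k$ in $L_i$, is also allocated completely, but that when agent $1$ reports only the prefix $P$ (the part of $L_i$ before $h_j$), the eating start time of $h_k$ is strictly smaller than that of $h_j$, or they are equal and $h_k \succ_1 h_j$. I would then construct a competing preference list by swapping $h_j$ and $h_k$ (or more precisely, moving $h_k$ to the position just before $h_j$), and show that this competing list is also a DL best response that agrees with $L_i$ on all houses agent $1$ strictly prefers up to the relevant position, contradicting the definition of stinginess, which requires $L_i(j')$ to be the \emph{first} house in the stingy ordering admitting a DL best-response extension.

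The key steps, in order, would be: (1) First invoke Lemma \ref{lem:eq-ass} and the uniqueness of the stingy preference to fix the fractional allocation on $H_i$ — in particular both $h_j$ and $h_k$ get exactly $1$ in any DL best response on $H_i$. (2) Next, analyze the PS run on the prefix $P$: since agent $1$ eats $h_j$ completely later, and $h_k$'s eating start time under $P$ is no later than $h_j$'s, I would show that inserting $h_k$ immediately before $h_j$ still lets agent $1$ eat $h_k$ completely (agent $1$ reaches $h_k$ no later than he reached $h_j$ in the original run, and $h_k$ was eaten completely there, so it is at most as contested now) — this is where the eating-start-time comparison is used, together with the fact that agent $1$'s earlier choices $P$ are unchanged so the other agents' behaviour on $P$ is identical. (3) Then I would check that after this swap, agent $1$ still eats $h_j$ completely afterwards (its availability only improved, since $h_k$ is consumed by agent $1$ slightly earlier/at worst the same, freeing nothing but changing nothing adverse), and that the fractions of all other houses in $H_i$ are unchanged — so the swapped list is a DL best response on $H_i$ too. (4) Finally, compare the swapped list with $L_i$ lexicographically according to $\succ_1$: they agree on the prefix $P$, and at the first position where they differ the swapped list places $\min_{\succ_1}$-appropriate house; combined with the stingy-ordering tie-breaking rule (earlier eating start time first, then $\succ_1$), the hypothesis that $h_k$ should have come first contradicts the choice made by the stingy preference at that position.

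The main obstacle I expect is step (2)–(3): showing that swapping $h_k$ before $h_j$ does not disturb the completeness of agent $1$'s allocation of \emph{either} house, nor the fractions of the intervening houses in $M$ and the suffix $S$. The subtlety is that changing the order in which agent $1$ eats $h_j$ and $h_k$ changes the time points at which agent $1$ moves on to later houses, which can in principle change their eating start times and hence the competition agent $1$ faces — so I would need a careful monotonicity argument (analogous to the one implicit in the stingy-ordering definition, which already anticipates that "eating start times of later houses may change depending on agent 1's ordering of earlier houses") showing that because both houses are fully consumed by agent $1$ in \emph{both} orderings, the total time agent $1$ spends on $\{h_j,h_k\}$ is the same ($2$ time units minus what others eat, which is unaffected on the common prefix), so agent $1$ arrives at the suffix at the same moment and the rest of the run is identical. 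Establishing this "conservation of arrival time" cleanly is the crux; once it is in hand, the lexicographic contradiction with stinginess is immediate.
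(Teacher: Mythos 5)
Your proposal is correct and follows essentially the same route as the paper, whose entire proof is the one-line contradiction ``swapping $h_j$ and $h_k$ in $L_i$ gives the same fractional allocation to agent 1, contradicting stinginess.'' Your steps (2)--(4), in particular the conservation-of-arrival-time argument, are just the details the paper leaves implicit in that assertion, so there is no substantive difference in approach.
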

		\begin{proof}
		 Suppose not.
		 But then, $L_i$ is not stingy since swapping $h_j$ and $h_k$ in $L_i$ gives the same fractional allocation to agent 1.
		\end{proof}

		\noindent
		We now show that when iterating from a set of houses $H_{i-1}$ to $H_i$, the previous solution can be reused up to the last house that agent 1 receives partially.

		\begin{lemma}\label{lem:same-prefix}
		 Let $L_{i-1}$ and $L_i$ be stingy preferences for $H_{i-1}$ and $H_i$, respectively.
		 Suppose there is a $h\in H_{i-1}$ such that $0<PS1(L_{i-1},h)<1$.
		 Then the prefixes of $L_{i-1}$ and $L_i$ coincide up to $h$.
		\end{lemma}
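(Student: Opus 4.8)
The plan is to prove that the stingy preferences $L_{i-1}$ and $L_i$ select the same house at every position up to and including the position of $h$; this is done by an induction on the position together with a ``first difference'' argument. I would first record the structural facts. Applying Lemma~\ref{lem:eq-ass} once with index $i$ and once with index $i-1$ (each time against the full house set $H$) shows that all DL best responses on $H_i$ give agent~$1$ one and the same fractional allocation $\alpha_i$, that the restriction of $\alpha_i$ to $H_{i-1}$ equals the allocation $\alpha_{i-1}$ of any DL best response on $H_{i-1}$, and hence $PS1(L_i,g)=PS1(L_{i-1},g)$ for every $g\in H_{i-1}$; in particular $0<PS1(L_i,h)<1$. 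Writing $h=h_j$ and applying Lemma~\ref{lem:before-half} to the stingy preferences $L_{i-1}$ and $L_i$ gives $L_{i-1}=P\oplus h\oplus S$ and $L_i=P'\oplus h\oplus S'$ with $P,P'\subseteq H_{j-1}$. Since $h_j\in H_{i-1}$ we have $j\le i-1$, so $h_i\notin H_{j-1}$ and therefore $h_i$ occurs in $L_i$ (if at all) only after $h$; consequently the part of the PS run in which agent~$1$ eats $P'$ and then $h$ is completely unaffected by $h_i$ and lies inside $H_j=H_{j-1}\cup\{h\}$.

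It then suffices to show: if $L_{i-1}$ and $L_i$ agree on positions $1,\dots,k-1$ and $h$ is not among these houses, then $L_{i-1}(k)=L_i(k)$ (iterating from $k=1$ up to the position of $h$ yields the lemma). The common prefix $Q$ then lies in $H_{j-1}$, and $L_{i-1}(k)\in H_j$ (being among $P\oplus h$). Running PS with $Q$ for agent~$1$ is identical in both settings, so the stingy orderings for position $k$ with respect to $H_{i-1}$ and to $H_i$ agree on all houses of $H_{i-1}$ and differ only by the extra element $h_i$. By stinginess, $L_{i-1}(k)$ is the first house of the $H_{i-1}$-ordering admitting a DL best response on $H_{i-1}$ that begins with $Q\oplus L_{i-1}(k)$, and $L_i(k)$ is the first house of the $H_i$-ordering admitting such a response on $H_i$. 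These coincide once one proves the \emph{transfer claim}: for a list $Q^*\subseteq H_j$ that is a prefix of some no-$0$ DL best response on $H_{i-1}$, there is a DL best response on $H_{i-1}$ starting with $Q^*$ if and only if there is a DL best response on $H_i$ starting with $Q^*$. Granting the claim, suppose $L_{i-1}(k)\ne L_i(k)$; both houses lie in $H_{i-1}$, so they have the same relative order in the two orderings. If $L_{i-1}(k)$ comes first, the claim upgrades the $H_{i-1}$-response beginning $Q\oplus L_{i-1}(k)$ to an $H_i$-response with the same prefix, contradicting that $L_i(k)$ is the first house with an $H_i$-continuation. If $L_i(k)$ comes first, restricting the $H_i$-response beginning $Q\oplus L_i(k)$ to $H_{i-1}$ (Lemma~\ref{lem:eq-ass}) gives an $H_{i-1}$-response, contradicting the choice of $L_{i-1}(k)$. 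Hence $L_{i-1}(k)=L_i(k)$.

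The hard part will be the ``only if'' direction of the transfer claim; the ``if'' direction is immediate by restriction to $H_{i-1}$ together with Lemma~\ref{lem:eq-ass}. So let $L'=Q^*\oplus R$ be a no-$0$ DL best response on $H_{i-1}$; one must exhibit a DL best response on $H_i$ with prefix $Q^*$. If $\alpha_i(h_i)=0$ this is immediate, since $L'$ itself already realises $\alpha_i$ on all of $H_i$. Otherwise agent~$1$ finishes eating the houses of $\mathrm{supp}(\alpha_{i-1})$ strictly before time $m/n$, and one must splice $h_i$ into $R$ at a position that is early enough for agent~$1$ to still obtain the optimal amount $\alpha_i(h_i)$ of $h_i$ (eating $h_i$ later can only reduce what agent~$1$ gets of it) yet late enough that delaying the tail of $R$ costs agent~$1$ nothing on $H_{i-1}$. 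The existence of such a slot can be extracted from the stingy preference $L_i$ itself --- in which $h_i$ is eaten right after a prefix of houses drawn from $\mathrm{supp}(\alpha_{i-1})$ --- and verifying that this splice leaves the $H_{i-1}$-allocation unchanged is the technical heart of the argument.
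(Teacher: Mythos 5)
Your reduction of the lemma to a ``first difference'' argument is the right skeleton: using Lemma~\ref{lem:eq-ass} to equate the two allocations on $H_{i-1}$, Lemma~\ref{lem:before-half} to confine both prefixes to $H_{j-1}$ and to push $h_i$ past $h$ in $L_i$, and then stinginess at the first position $k$ of disagreement, where the stingy orderings computed from the common prefix $Q$ agree on all $H_{i-1}$-houses. However, the whole argument bottoms out in your \emph{transfer claim}, and its ``only if'' direction is asserted rather than proved --- you yourself label it ``the technical heart.'' Without it, neither half of the intended contradiction at position $k$ goes through. The difficulty is real: given an arbitrary no-$0$ DL best response $L'$ on $H_{i-1}$ with prefix $Q^*$, the position of $h_i$ in the stingy list $L_i$ does not directly transfer to $L'$, because $L'$ and $L_i$ may order the fully-allocated houses differently, so agent~1's arrival times at intermediate positions and the eating start times of the remaining houses need not coincide across the two lists. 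One must show that $h_i$ can be spliced into $L'$ so that (a) agent~1 still obtains the full optimal share $\alpha_i(h_i)$ (eating $h_i$ too late loses part of it) and (b) the resulting delay of the tail does not reduce his share of any subsequent house of $H_{i-1}$ (eating $h_i$ too early can cost him a house that other agents are about to exhaust). Neither point is verified, and it is exactly the interaction between these two constraints that the lemma is about.

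A secondary gap: the ``if'' direction is not ``immediate by restriction.'' Deleting $h_i$ from a DL best response on $H_i$ changes the dynamics of the entire suffix --- agent~1 reaches later houses earlier, those houses are exhausted at different times, and the other agents transition at different times --- so the truncated list need not obviously still realize $\alpha_{i-1}$. Lemma~\ref{lem:eq-ass} compares \emph{optimal} allocations across house sets; it says nothing about the allocation produced by a particular modified list. You need an explicit monotonicity argument (skipping a house weakly increases agent~1's share of every remaining house), combined with DL-optimality of $\alpha_{i-1}$, to force equality. Both directions of the transfer claim therefore require timing arguments about the PS run that your write-up does not supply.
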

		% \begin{proof}
		% 	 Suppose not.
		% 	 By Lemma \ref{lem:eq-ass}, $PS1(L_i,h) = PS1(L_{i-1},h)$.
		% 	 Let $P_{i-1}=P_i$ denote a maximum common prefix of $L_{i-1}$ and $L_i$, and write
		% 	     $L_{i-1} = P_{i-1} \oplus x_{i-1} \oplus M_{i-1} \oplus h \oplus S_{i-1}$ and
		% 	     $L_{i}   = P_{i}   \oplus x_{i}   \oplus M_{i}   \oplus h \oplus S_{i}$.
		% 	 By Lemma \ref{lem:before-half}, $h \succ_1 h_i$, and therefore, $h_i\in S_i$.
		% 	 Since $L_{i-1}$ and $L_i$ are no-$0$, we have that $PS1(L_{i-1},x_{i-1})>0$ and $PS1(L_{i},x_{i})>0$.
		% 	 Now, if $PS1(L_{i-1},x_{i-1})<1$, then since at least one other agent eats $x_{i-1}$ concurrently with agent 1 when he reports $L_{i-1}$, he loses a non-zero fraction of $x_{i-1}$ when instead he reports $L_i$ and eats $x_i$ after having exhausted $P_i$, we have that $PS1(L_{i},x_{i-1}) < PS1(L_{i-1},x_{i-1})$, a contradiction to Lemma \ref{lem:eq-ass}.
		% 	 Similarly, we obtain a contradiction when $PS1(L_{i},x_{i})<1$.
		% 	 Therefore, $PS1(L_{i-1},x_{i-1})=PS1(L_{i},x_{i})=1$.
		% 	 Now, by Lemma \ref{lem:eq-ass}, we also have that $PS1(L_{i-1},x_{i-1})=PS1(L_{i},x_{i})=1$.
		% 	 But only one of $x_i,x_{i-1}$ can come earlier in the stingy ordering. The other one contradicts Lemma \ref{lem:ones-same-order}.
		% 	\end{proof}

		\noindent
		We are now ready to describe how to obtain $L_i$ from $L_{i-1}$. See Algorithm~\ref{algo:DL-BR} for the pseudocode.
		The subroutine \textbf{EST}$(N,H,\succ)$ executes the PS rule for $(N,H,\succ)$ and for each item, records the first time point where some agent starts eating it. It returns the eating start times $est(h)$ for each house $h\in H$.

		Let $p$ be the last position in $L_{i-1}$ such that the house $L_{i-1}(p)$ is partially allocated to agent 1.
		In case agent 1 receives no house partially, set $p:=0$ and interpret $L_{i-1}(p)$ as an imaginary house before the first house of $L_{i-1}$.
		By Lemma \ref{lem:same-prefix}, we have that $L_{i-1}(s) = L_i(s)$ for all $s\le p$.
		By Lemma \ref{lem:eq-ass}, we have that the fractional assignment resulting from $L_i$ must wholly allocate all houses $L_{i-1}(p+1), \dots, L_{i-1}(|L_{i-1}|)$ to agent 1, and allocate a share of $0$ to all houses in $H_{i-1}\setminus L_{i-1}$.

		It remains to find the right ordering for $\{L_{i-1}(s): p+1\le s\le |L_{i-1}|\} \cup \{h_i\}$.
		By Lemmas \ref{lem:before-half} and \ref{lem:ones-same-order}, the prefixes of $L_{i-1}$ and $L_i$ coincide up to $h$.
		We will describe in the next paragraph how to determine the position $q$ where $h_i$ should be inserted.
		Having determined this position one may then need to re-order the subsequent houses.
		This is because inserting $h_i$ in the list may change the eating start times of the subsequent houses.
		This leads us to the following insertion procedure.
		The list $L_i^q$ obtained from $L_{i-1}$ by inserting $h_i$ at position $q$, with $p<q\le|L_i|+1$, is determined as follows.
		Start with $L_i^q := L_{i-1}(1) \oplus \dots \oplus L_{i-1}(q-1) \oplus h_i$.
		While $|L_i^q|\le |L_{i-1}|$, we append to the end of $L_i^q$ the first house among $L_{i-1}\setminus L_i^q$ in the stingy ordering for this position.
		After the while-loop terminates, run the PS rule for the resulting list $L_i^q$.
		In case we obtain that $PS1(L_i^q,h_i)=0$, we remove $h_i$ again from this list (and actually obtain $L_i^q=L_{i-1}$).

		The position $q$ where $h_i$ is inserted is determined as follows.
		Start with $q:=p$.
		We have an array $\mathsf{worse}$ keeping track of whether the lists $L_i^p , \dots, L_i^{i}$ produce a worse
		outcome for agent 1 than the list $L_{i-1}$.
		Set $\mathsf{worse}[p-1]:=\mathsf{true}$.
		As long as the list $L_i$ has not been determined, proceed as follows.
		Obtain $L_i^q$ from $L_{i-1}$ by inserting $h_i$ at position $q$, as described earlier.
		Consider the allocation of agent 1 when he reports $L_i^q$.
		If this allocation is not the same for the houses in $H_{i-1}$ as when reporting $L_{i-1}$, then set $\mathsf{worse}[q] := \mathsf{true}$, otherwise set $\mathsf{worse}[q] := \mathsf{false}$.
		If $\mathsf{worse}[q]$, then increment $q$.
		This is because, by Lemma \ref{lem:eq-ass}, this preference would not be a DL best response with respect to $H_i$.
		Otherwise, if $0<PS1(L_i^q,h_i)<1$, then we can determine $h_i$'s position. If $\mathsf{worse}[q-1]$, then set $L_i := L_i^q$, otherwise set $L_i:= L_i^{q-1}$.
		This position for $h_i$ is optimal since moving $h_i$ later in the list would decrease its share to agent 1.
		Otherwise, we have that $\mathsf{worse}[q] = \mathsf{false}$ and $PS1(L_i^q,h_i) \in \{0,1\}$.
		This will be the share agent 1 receives of $h_i$.
		If $PS1(L_i^q,h_i) = 0$, then set $L_i:=L_{i-1}$.
		Otherwise ($PS1(L_i^q,h_i) = 1$), it still remains to check whether the current position for $h_i$ gives a stingy preference.
		For this, run the PS rule with the preference $L_i^q(1) \oplus \dots \oplus L_i^q(q-1)$ for agent 1. If $h_i$'s eating start time is smaller than the eating start time of each house $L_i^q(r)$ with $r>q$, then set $L_i := L_i^q$, otherwise increment $q$.

		Thus, given $L_{i-1}$, the preference $L_i$ can be computed by executing the PS rule $O(m)$ times.
		The DL best response computed by the algorithm is $L_m$.
		Since the PS rule can be implemented to run in linear time $O(nm)$, the running time of this DL best response algorithm is $O(nm^3)$.

		\begin{theorem}\label{th:dlbr-in-P}
		 \textsc{DL-BR} can be solved in $O(nm^3)$ time.
		\end{theorem}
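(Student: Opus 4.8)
The plan is to prove the correctness of \algref{algo:DL-BR} and then read off the time bound from the counting already carried out before the theorem; the correctness argument is the substantive part. I would prove, by induction on $i\in\{1,\dots,m\}$, the invariant that the list $L_i$ held at the end of the $i$-th iteration is \emph{the} stingy preference for $H_i$, which is unique by \lemref{lem:eq-ass}. Since a stingy preference is by definition a (no-$0$) DL best response, the output $L_m$ is then a DL best response for $H=H_m$, as claimed. The base case $i=1$ is immediate: agent~$1$ starts eating $h_1$ at time $0$ at unit speed, so $PS1(h_1,h_1)>0$; thus $L_1=h_1$ is a no-$0$ DL best response for $H_1=\{h_1\}$, and the stingy condition at its single position holds trivially.

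For the inductive step, assume $L_{i-1}$ is the stingy preference for $H_{i-1}$, and let $p$ be the last position in $L_{i-1}$ carrying a partially allocated house (set $p:=0$ if there is none). The first half of the argument is structural: I would show that every stingy $L_i$ is obtained by inserting $h_i$ at some position $q$ into a list built from $L_{i-1}$. Concretely: (i) by \lemref{lem:same-prefix}, $L_i$ coincides with $L_{i-1}$ on positions $1,\dots,p$; (ii) by \lemref{lem:eq-ass}, the assignment induced by $L_i$ must give each of $L_{i-1}(p+1),\dots,L_{i-1}(|L_{i-1}|)$ entirely to agent~$1$ and share $0$ to every house of $H_{i-1}\setminus L_{i-1}$, so $L_i$ consists of the shared prefix followed by some rearrangement of $\{L_{i-1}(s):p<s\le|L_{i-1}|\}\cup\{h_i\}$, with $h_i$ possibly absent if it would receive share $0$; and (iii) by \lemref{lem:before-half} and \lemref{lem:ones-same-order}, once the position $q$ of $h_i$ is fixed, the wholly allocated houses that precede it beyond position $p$ and those that follow it are each forced to appear in the stingy ordering relative to the preceding prefix. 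Item (iii) is exactly what the insertion subroutine of \algref{algo:DL-BR} reconstructs --- copy $L_{i-1}(1),\dots,L_{i-1}(q-1)$, place $h_i$, then repeatedly append the not-yet-used house of $L_{i-1}$ that is first in the current stingy ordering --- so for each candidate position $q$ the list $L_i^q$ it produces is the unique candidate for a stingy $L_i$ with $h_i$ at position $q$ (or $L_i^q=L_{i-1}$ when that position forces $h_i$ to share $0$). This reduces the step to showing that the loop over $q$ picks the correct position.

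The key auxiliary fact, and the part I expect to be the main obstacle, is a monotonicity property of the PS rule: lengthening the prefix of agent~$1$'s list that precedes $h_i$ can only weakly decrease agent~$1$'s eventual share of $h_i$, so along the candidate positions the share $PS1(L_i^q,h_i)$ is non-increasing in $q$. Granting this, the stingy preference must place $h_i$ at the smallest \emph{feasible} position, where feasibility means (by \lemref{lem:eq-ass}) that $L_i^q$ leaves agent~$1$'s allocation on $H_{i-1}$ unchanged --- precisely the condition the flag $\mathsf{worse}[q]$ records. A position with $\mathsf{worse}[q]=\mathsf{true}$ is correctly skipped, because by (iii) $L_i^q$ is the only candidate with $h_i$ at position $q$, so if it is infeasible then no DL best response for $H_i$ places $h_i$ there. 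It then remains to verify the algorithm's decision at the first feasible $q$: if $PS1(L_i^q,h_i)=0$, monotonicity makes $h_i$ unattainable at every later position and feasibility fails at every earlier one, so $L_i=L_{i-1}$; if $0<PS1(L_i^q,h_i)<1$, this is the maximum attainable share of $h_i$, and the $\mathsf{worse}[q-1]$ test selects between $L_i^q$ and $L_i^{q-1}$ so as to take the latest feasible position achieving this share, as required by \lemref{lem:before-half}; and if $PS1(L_i^q,h_i)=1$, then by \lemref{lem:ones-same-order} position $q$ is the stingy one exactly when $h_i$'s eating start time under the prefix $L_i^q(1)\oplus\dots\oplus L_i^q(q-1)$ precedes that of every house placed after $h_i$ --- the test performed --- and otherwise $q$ is incremented, which terminates since $q\le|L_{i-1}|+1$. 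Checking that these cases are exhaustive and mutually consistent, in particular that incrementing past a share-$1$ position never passes over the optimal position, is routine but delicate, and is where I would concentrate the care.

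The running time is then immediate from the discussion preceding the theorem: there are $O(m)$ iterations; each one, given $L_{i-1}$, computes $L_i$ with $O(m)$ executions of the PS rule; and the PS rule runs in time $O(nm)$; hence the total is $O(nm^3)$.
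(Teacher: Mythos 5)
Your proposal follows essentially the same route as the paper: the paper's argument for this theorem is precisely the sequence of Lemmas~\ref{lem:eq-ass}, \ref{lem:before-half}, \ref{lem:ones-same-order} and \ref{lem:same-prefix} together with the narrative justification of Algorithm~\ref{algo:DL-BR} (inductively maintaining the unique stingy preference $L_i$ for $H_i$, reusing the prefix up to the last partially allocated house, inserting $h_i$ at the first feasible position, and re-ordering the suffix by the stingy ordering), followed by the same $O(m)$-iterations $\times$ $O(m)$-PS-calls $\times$ $O(nm)$-per-call count. The only substantive difference is that you isolate as an explicit monotonicity claim (agent~$1$'s share of $h_i$ is non-increasing as $h_i$ moves later) what the paper asserts only in passing, which is a reasonable way to organize the same argument.
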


		\begin{example}
		 Consider the following instance. % for \textsc{DL-BR}.
		 \begin{align*}
			\succ_1: h_1, h_2, h_3, h_4, h_5, h_6, h_7, h_8, h_9, h_{10}\\
			\succ_2: h_8, h_3, h_5, h_2, h_{10}, h_1, h_6, h_7, h_4, h_9\\
			\succ_3: h_9, h_4, h_7, h_1, h_2, h_6, h_5, h_3, h_8, h_{10}
		 \end{align*}
		 After having computed $L_2=h_1, h_2$, the algorithm is now to consider $H_3$. Since $PS1(L_2,h_1)=PS1(L_2,h_2)=1$, the algorithm first considers $L_3^1 = h_3, h_2, h_1$. Note that $h_1$ and $h_2$ have been swapped with respect to $L_2$ since agent 2 starts eating $h_2$ before agent 3 starts eating $h_1$ when agent 1 reports the preference list consisting of only $h_3$. It turns out that $PS1(L_3^1,h_1)=PS1(L_3^1,h_2)=PS1(L_3^1,h_3)=1$. Thus, $\mathsf{worse}[1] = \textsf{false}$. Since $h_3$ does not come first in the stingy ordering, the algorithm needs to verify whether moving $h_3$ later will still give a DL best response with respect to $H_3$.
		 It then considers $L_3^2 = h_1, h_3, h_2$. However, this allocates only half of $h_3$ to agent 1, implying $\mathsf{worse}[2] = \mathsf{true}$. Since $\mathsf{worse}[1] = \mathsf{false}$, the algorithm sets $L_3 = L_3^1$.
		 The DL best response computed by the algorithm is $L_{10}=h_3, h_2, h_1, h_6$.
		\end{example}

		\noindent
		We note that a DL best response is also an SD best response. 
		% Hence a DL-best response is one which no other response DL-dominates. This means that no other response SD-dominates it since DL is a refinement of SD. Hence, a DL best response is a SD best response.
		One may wonder whether an algorithm to compute the DL best response also provides us with an algorithm to compute an EU best response. However, a DL best response may not be an EU best response for three or more agents.
		%\begin{remark}
			% A DL best response may not be an EU best response for 3 or more agents. 
			Consider the preference profile in Example~\ref{example:PS}. Since the number of houses is equal to the number of agents, reporting the truthful preference is a DL best response~\citep{ScVa12a}. However, we have shown a different preference for agent 1 where he may obtain higher utility.
		%\end{remark}

		%\noindent
		%In the next section, we show that for two agents, an EU best response can be computed in polynomial time.

		%\section{Computing an expected utility best response for the case of two agents}

			%	\section{Alternative proof of the 2-agent EU best response}

\section{Expected utility best response}

		In this section, we consider the problem of expected utility best response.
			 \subsection{Case of two agents}
		We first show that for the case of two agents, an EU best response can be computed in linear time. The result hinges on a close connection that we identify between PS and discrete allocation of objects to agents via \emph{sequential allocation}. 
		In the sequential allocation setting $(N,O,\pref',\pi)$, there is an agent set $N$, an object set $O=\{o_1,\ldots o_{m'}\}$, a preference profile $\pref'$ that specifies for each agent $i\in N$ his preferences $\pref_i'$ over $O$, 
		and a policy $\pi:\{1,\ldots, m'\}\rightarrow N$. The sequential allocation rule works as  follows. Starting from $j=1$ to $m'$, agent $\pi(j)$ gets his most preferred object that is not yet allocated. If no unallocated object is on the preference list of the agent, then the agent does not get any object when his turn comes. The assignment as a result of sequential allocation is denoted by $SA(N,O,\pref',\pi)$.
		We will restrict ourselves to the case where $N=\{1,2\}$ and will only consider the alternating policy $\pi^*=1212\ldots$ in which agent $1$ starts first and then the agents keep alternating. The sequential allocation setting was introduced by 
		~\citet{KoCh71a} where they showed that the best response can be computed in linear time when $|N|=2$ and the policy is the alternating sequence. Recently, \citet{BoLa14a} generalized their result to the case of any number of agents, any policy, and where the manipulator may be indifferent between objects.

			We highlight a close connection between sequential allocation and PS and thereby between allocation mechanisms for indivisible and divisible houses.
		% \paragraph{Half-house transformation}
		For the random assignment setting $(\{1,2\},H,\pref)$, the \emph{half-house reduction} gives us the sequential allocation setting $(\{1,2\},O,\pref',\pi^*)$. In the reduction, each house $h_j\in H$ is cloned so that we have two half-houses $h_j^1$ and $h_j^2$ for each house $h_j$:  $O=\{h_j^1,h_j^2\midd j=1,\ldots, m\}$. 
%		Both agents have preferences over half-houses that are consistent with their preferences over houses so that if if $h_j\pref_i h_k$, then $h_j^{1}\pref_i' h_k^1$, $h_j^{1}\pref_i' h_k^2$, $h_j^{2}\pref_i' h_k^1$, and $h_j^{2}\pref_i' h_k^2$. Both agents prefer the first half-house slightly more than the second half-house: $h_j^1\pref_1' h_j^2$ and $h_j^1\pref_2' h_j^2$ for all $j\in \{1,\ldots, m\}$.
%		We show that for $n=2$, the assignment under PS is `essentially' the same as the assignment obtained by applying sequential allocation to the setting half-house reduction:
%		
		Both agents have preferences over half-houses that are consistent with their preferences over houses and for each house, each agent prefers the first half-house slightly more than the second half-house: if $h_j\pref_i h_k$, then $h_j^{1}\pref_i' h_j^{2}\pref_i'h_k^{1}\pref_i' h_k^2$.
		We show that for $n=2$, the assignment under PS is `essentially' the same as the assignment obtained by applying sequential allocation to the setting resulting from the half-house reduction:

		\begin{remark}\label{remark:half-house}
			The assignment $PS(\{1,2\},H,\pref)$ and the assignment  $SA(\{1,2\},O,\pref',\pi^*)$ are related as follows:		
			$PS(\{1,2\},H,\pref)(i)(h_j)= \frac{1}{2}\cdot (SA(\{1,2\},O,\pref',\pi^*)(i)(h_j^1) + SA(\{1,2\},O,\pref',\pi^*)(i)(h_j^2))$. 

%			$PS(\{1,2\},H,\pref)(i)(h_j)=1$ if $SA(\{1,2\},O,\pref',\pi^*)(i)(h_j^1)=1$ and  $SA(\{1,2\},O,\pref',\pi^*)(i)(h_j^2)=1$; \\\\
%		$PS(\{1,2\},H,\pref)(i)(h_j)=0$ if $SA(i)(h_j^1)=0$ and  $SA(\{1,2\},O,\pref',\pi^*)(i)(h_j^2)=0$; \\\\
%		$PS(\{1,2\},H,\pref)(i)(h_j)=1/2$ if $SA(i)(h_j^1)=1$ and  $SA(\{1,2\},O,\pref',\pi^*)(i)(h_j^2)=0$ or if $SA(\{1,2\},O,\pref',\pi^*)(i)(h_j^1)=0$ and  $SA(\{1,2\},O,\pref',\pi^*)(i)(h_j^2)=1$. 
	
			% \begin{itemize}
			% 	\item 	$PS(\{1,2\},H,\pref)(i)(h_j)=1$ if $SA(\{1,2\},O,\pref,\pi^*)(i)(h_j^1)=1$ and  $SA(\{1,2\},O,\pref,\pi^*)(i)(h_j^2)=1$;
			% 	\item 	$PS(\{1,2\},H,\pref)(i)(h_j)=0$ if $SA(i)(h_j^1)=0$ and  $SA(\{1,2\},O,\pref,\pi^*)(i)(h_j^2)=0$;
			% 	\item  $PS(\{1,2\},H,\pref)(i)(h_j)=1/2$ if $SA(i)(h_j^1)=1$ and  $SA(\{1,2\},O,\pref,\pi^*)(i)(h_j^2)=0$ or $SA(\{1,2\},O,\pref,\pi^*)(i)(h_j^1)=0$ and  $SA(\{1,2\},O,\pref,\pi^*)(i)(h_j^2)=1$.
			% \end{itemize}

			\end{remark}
	
			We note that in the half-house reduction, each preference list $\pref_i'$ satisfies the \emph{consecutivity property}: half-houses corresponding to the same house are placed consecutively in the preference list. We will use the consecutivity property in our argument.

		\begin{theorem}\label{th:2agents-eubr-in-P}
			For the case of two agents, an EU best response can be computed in linear time.
		\end{theorem}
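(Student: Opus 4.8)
The plan is to reduce the problem, via the half-house reduction of Remark~\ref{remark:half-house}, to a best-response problem for sequential allocation with two agents and the alternating policy, and then to solve the latter in linear time while respecting the extra structure that a genuine PS report must have.

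First I would extend agent~1's utility function to the half-houses by $u_1'(h_j^1)=u_1'(h_j^2)=\tfrac12 u_1(h_j)$, and for a strict order $L$ of $H$ write $L'$ for its (unique) consecutive extension to $O$. Applying Remark~\ref{remark:half-house} to the profile in which agent~1 reports $L$, agent~1's expected utility under PS equals the $u_1'$-value of the set of half-houses he wins under sequential allocation when he reports $L'$. Since $u_1'$ does not distinguish the two copies of a house, this value depends only on the \emph{copy-count vector} $(x_1,\dots,x_m)$, where $x_j\in\{0,1,2\}$ is the number of copies of $h_j$ that agent~1 wins (so $PS(\{1,2\},H,(L,\succ_2))(1)(h_j)=x_j/2$), and it equals $\sum_j \tfrac12 u_1(h_j)\,x_j$. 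As $L\mapsto L'$ is a bijection onto the strict orders of $O$ satisfying the consecutivity property, computing an EU best response amounts to maximizing $\sum_j \tfrac12 u_1(h_j)\,x_j$ over the copy-count vectors realizable by a consecutively-ordered report.

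Index the houses so that $h_1\succ_2 h_2\succ_2\cdots\succ_2 h_m$; then agent~2's order on $O$ is $h_1^1,h_1^2,h_2^1,h_2^2,\dots$. The structural claim I would establish is: a vector $(x_1,\dots,x_m)$ with $x_j\in\{0,1,2\}$ and $\sum_j x_j=m$ is the copy-count vector of agent~1 for some consecutively-ordered report if and only if $\sum_{j\le i}x_j\le i$ for every $i\in\{1,\dots,m\}$. The ``only if'' direction is the standard greedy-opponent argument for sequential allocation --- agent~2 never passes over an available object he prefers --- applied to the half-house instance and aggregated over the two copies of each house (agent~2 necessarily wins at least $i$ of the $2i$ copies of his $i$ most preferred houses). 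Granting the claim, an EU best response is obtained by: picking a feasible $(x_j)$ maximizing $\sum_j \tfrac12 u_1(h_j)\,x_j$ --- doable in linear time, for instance by running the two-agent alternating-policy best-response algorithm of \citet{KoCh71a} and \citet{BoLa14a} on the half-house instance and reading off the resulting copy-counts --- and then reading off a consecutive house order realizing $(x_j)$: roughly, one places the houses with $x_j=2$ so agent~1 reaches them just before agent~2 would, the houses with $x_j=1$ at the positions where the two agents' eating clocks force a shared house, and the houses with $x_j=0$ afterwards.

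The step I expect to be the main obstacle is the ``if'' direction of the structural claim together with this realization rule: one must show that \emph{every} vector satisfying the prefix inequalities is achievable not merely by some picking order over $O$, but by a \emph{consecutive} one, i.e. by a single preference over $H$. I would prove this by induction on $m$, or by explicitly interleaving the two ``racing'' processes of agent~1 against agent~2's schedule, using crucially that the two copies of each house are adjacent in agent~2's order, so that moving or postponing agent~1's visit to a house shifts agent~2's eating schedule only in a controlled, block-by-block way. This is exactly where the consecutivity property of the reduction enters, and it is what lets the solution of the (indivisible) half-house instance be translated back into one preference over the (divisible) houses. Since the reduction, the greedy choice of $(x_j)$, and the realization step are each linear in the input size, the whole procedure runs in linear time, establishing Theorem~\ref{th:2agents-eubr-in-P}.
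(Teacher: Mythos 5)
Your reduction to sequential allocation over half-houses is exactly the paper's starting point (the half-house reduction of Remark~\ref{remark:half-house}), and your ``only if'' direction --- agent 2, picking greedily, secures at least $i$ of the $2i$ copies of his top $i$ houses, so $\sum_{j\le i}x_j\le i$ --- is sound. Where you diverge is in what you then set out to prove: a complete characterization of the achievable copy-count vectors by the prefix inequalities, together with a procedure realizing \emph{any} feasible vector by a consecutive order. The paper proves something strictly weaker and easier: it runs the Bouveret--Lang best-response algorithm for sequential allocation (with agent 1 indifferent between the two copies of each house and ties broken by agent 2's order) and then shows that \emph{that particular} optimal report can be massaged into a consecutive one without changing the allocation, by two local insertions --- if agent 1 wins $h_j^1$ but not $h_j^2$, then $h_j^2$ can be inserted immediately after $h_j^1$ since agent 1 will not get it anyway; if agent 2 wins $h_j^1$ while agent 1 wins $h_j^2$, then $h_j^1$ can be inserted just before $h_j^2$ since agent 1 could not have won it. No characterization of the whole feasible region is needed on that route.

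The genuine gap in your proposal is the step you yourself flag: the ``if'' direction of the structural claim, i.e., that every vector satisfying the prefix inequalities is realizable by a \emph{consecutive} order. On your route this is the entire technical content of the theorem, and what you offer is not a proof. The realization rule (``place the $x_j=2$ houses so agent 1 reaches them just before agent 2 would, the $x_j=1$ houses where the clocks force a share'') is circular as stated, because where agent 2 ``would'' reach a house depends on the very order you are constructing; and an induction on $m$ needs a precise hypothesis about how inserting or postponing a house shifts agent 2's picking schedule block by block. Small cases confirm the claim and I believe it is true, but until it is proved your argument only shows that the optimum over consecutive reports is \emph{at most} the greedy optimum over the prefix-feasible vectors, not that it attains it. The repair closest to your plan is to imitate the paper: take the one feasible vector produced by the Kohler--Chandrasekaran/Bouveret--Lang greedy on the half-house instance and show that this single vector is consecutively realizable via the two insertion arguments above, rather than proving realizability for the entire feasible region.
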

		\begin{proof}
			We consider the EU best response problem for PS where  the manipulator, agent $1$, has preferences $\pref_1: h_1,\ldots, h_m$. 
			The main idea is to reduce the EU best response problem $(\{1,2\},H,\pref)$ for PS to the EU best response problem $(\{1,2\},O,\succsim',\pi^*)$ for sequential allocation. The reduction is a slight modification of the half-house reduction with the difference that agent $1$ is indifferent between two half-houses corresponding to the same house. The object set is $O=\{h_j^1,h_j^2\midd j=1,\ldots, m\}$. 
		In  $\succsim'$, both agents have preferences over half-houses that are consistent with their preferences over houses. 
		We will assume without loss of generality that agent $2$ prefers the first half-house slightly more than the second half-house. Agent $1$ is indifferent between any two half-houses corresponding to the same house: $h_j^1\sim_1' h_j^2$ for all $j\in \{1,\ldots, m\}$ but will be required to report strict preferences.  When we consider sequential allocation, we will view it in rounds so that in each round, first agent $1$ picks a most preferred available house and then agent $2$ picks a most preferred available house.

		% We compute the EU best response for the half-houses corresponding to $h_1,\ldots, h_k$ for $k=1$ to $m$ via the algorithm by \citet{BoLa14a} for $n=2$.

		In the algorithm by \citet{BoLa14a}, when agents have strict preferences, it is checked whether the manipulator (agent $1$) can get different target sets of objects. In the algorithm, only a linear number of target sets need to be considered.
		Given target set $T_k$ which is restricted to objects from $o_1,\ldots, o_k$, we can compute target set $T_{k+1}$ as follows: check whether target set $T_{k}\cup \{o_{k+1}\}$ can be achieved or not.  $T_{k+1}=T_{k}\cup \{o_{k+1}\}$  if $T_{k}\cup \{o_{k+1}\}$ can be achieved and $T_{k+1}=T_{k}$ otherwise. $T_m$ is then the most preferred allocation that agent $1$ achieves and the allocation is unique. 
		When the manipulator is indifferent among objects, \citet{BoLa14a} showed that their algorithm can be easily modified as follows: agent $1$ considers a linear order instead of his actual weak order where the linear order is achieved by breaking ties between the indifferent objects in the same order as the preference of agent $2$. Based on this insight, observe that both agents will pick $h_j^1$ before $h_j^2$ for any $j\in \{1,\ldots, m\}$ if they report truthfully.
		% and in case agents pick one eac of the half-houses 

		% We now show that the best response of agent $1$ in the sequential allocation setting $I'=(N,O,\succsim',\pi^*)$ can be used to compute the best response of agent $1$ in $(N,H,\pref)$ under PS.
		% % Assume that in the sequential allocation setting,  $T_{2m}$ is such that agent $1$ either gets both half-houses corresponding to a house or none of them, then his optimal preference report for sequential allocation trivially satisfies the consecutivity property.

		We first show that there exists a best response of agent $1$ in the sequential allocation setting $(N,O,\succsim',\pi^*)$ that satisfies the consecutivity property.
		If agent $1$ either gets both half-houses corresponding to a house or none of them, then his optimal preference report for sequential allocation  trivially satisfies the consecutivity property.
		 % Assume that in the sequential allocat
		If this is not the case, then let us consider the most preferred house $h_j$ for which agent $1$ gets one of the corresponding half-houses but not the other. 
% If agent $1$ only gets $h_j^1$ but not $h_j^2$, this means that in the round agent $1$ got $h_j^1$, agent $2$ got $h_j^2$.
If agent $1$ only gets $h_j^1$ but not $h_j^2$, this means that in his best response for houses restricted to $\{h_1^1, \ldots, h_j^2\}$, $h_j^2$ was already taken by agent $2$ in a round in which agent $1$ picked some other object. Then agent $1$ can eventually insert $h_j^2$ immediately after $h_j^1$ in his best response preference knowing well that he will not get $h_j^2$. Thus, the best response for sequential allocation can be modified so that it satisfies the consecutivity property and  yields the same optimal allocation. 
%Nina thinks this case is vaccuous: I agree but proving it is vaccuous requires proving more structural properties of the optimal manipulation.
Now consider the case where agent $2$ gets $h_j^1$ but agent $1$ gets $h_j^2$.
Then this means that agent $1$ cannot get $h_j^1$ in his best response when his preference is restricted only to houses from the set $\{h_1^1,h_1^2,\ldots, h_{j-1}^1, h_{j-1}^2, h_j^1\}$. Therefore, agent $1$ can still insert $h_j^1$ eventually just before $h_j^2$ in his best response so that the consecutivity property is satisfied and the allocation does not change even though agent $1$ does not get $h_j^1$ in his best response.

% Now consider the case where agent $2$ gets $h_j^1$ but agent $1$ gets $h_j^2$. But this means that when target set was constructed to check whether agent $1$ could get $h_j^1$, that target set was not feasible.
% Suppose agent 1 gets  $h_j^2$ at round $k$ where a round is a turn in which first agent $1$ picks a most preferred available house and then agent $2$ picks a most preferred available house.
% We know that agent 2 got $h_j^1$ at round  $t$, $t < k$.
% Due to consecutivity property of agent $2$'s profile
% we know that agent 2 will ask  for  $h_j^2$
% in the next round $t+1$. Hence,  $k = t + 1$. Now in round $k-1$, agent $1$ asked for and got $h_x^2$ where $h_x\succ_1 h_j$. Had agent $1$ asked for and got $h_x^1$ in round $k-1$, then it would ask for and get $h_x^2$ in round $k$. But if

		We now show that the best response of agent $1$ in the sequential allocation setting $(N,O,\succsim',\pi^*)$ can be used to compute the best response of agent $1$ in $(N,H,\pref)$ under PS.
		Let $U$ be the expected utility for agent $1$ under his best response $\pref_1^*$ in the PS setting. The best response $\pref_1^*$ corresponds to ${\pref_1^*}'$ over the set of half-houses. 
		By Remark~\ref{remark:half-house}, agent $1$ achieves essentially the same allocation and hence the same utility $U$ in the sequential allocation setting if he submits preference ${\pref_1^*}'$.
		Conversely, if agent $1$ achieves utility $U$ in the sequential allocation setting via a preference report, then he achieves at least as much utility by reporting his optimal preference ${\pref_1^*}'$ constructed via the algorithm of \citet{BoLa14a}. Hence, the preference ${\pref_1^*}'$ can be modified as shown above so that it satisfies the consecutivity property. In this case,
		there exists a preference $\pref_1^*$ over $H$ which is consistent with the preferences ${\pref_1^*}'$ over $O$. If agent $1$ reports $\pref_1^*$, then he gets essentially the same allocation as $SA(\{1,2\}, O, ({\pref_1^*}', \pref_2')(1)$ and thus gets utility $U$.
			\end{proof}

		The best response algorithm of \citet{BoLa14a} returns the same optimal preference report for all cardinal utilities consistent with the ordinal preference of the manipulator. Next, we point out that for the case of two agents and the PS rule, a DL best response and an EU best response are equivalent.

				\begin{proposition}
					For the case of two agents and the PS rule, a DL best response is an EU best response and an EU best response is a DL best response.
					% best response is also an EU best response for any utilities consistent with the ordinal preferences.
				\end{proposition}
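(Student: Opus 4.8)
The plan is to identify, for the given instance with $N=\{1,2\}$ and agent~$2$'s report $\succ_2$ fixed, one fractional allocation $p^*(1)$ that agent~$1$ can realise and that is simultaneously the unique SD-maximal, the unique DL-maximal, and (for every utility consistent with $\succ_1$) the unique EU-maximal achievable allocation of agent~$1$. Once this is established, the DL best responses and the EU best responses are \emph{both} exactly the reports $\succ_1'$ with $PS(\succ_1',\succ_2)(1)=p^*(1)$, and the proposition follows immediately.

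\textbf{Step 1 (the key input).} First I would show that among all allocations $PS(\succ_1',\succ_2)(1)$ obtainable by letting $\succ_1'$ range over strict orders of $H$, there is one, $p^*(1)$, with $p^*(1)\succ_1^{SD} PS(\succ_1',\succ_2)(1)$ for every $\succ_1'$ producing a different allocation. This is essentially contained in the proof of \thmref{th:2agents-eubr-in-P}: via the half-house reduction and \remref{remark:half-house}, every achievable PS allocation of agent~$1$ corresponds, up to the factor $1/2$ and the consecutivity normalisation, to an achievable target set of agent~$1$ in the associated sequential-allocation instance; the greedy target set $T_m$ computed by the algorithm of \citet{BoLa14a} is optimal for \emph{every} cardinal utility consistent with agent~$1$'s ordinal preference, equivalently its prefix sums (in agent~$1$'s order over objects) weakly dominate those of every achievable target set. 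Since a prefix of length $2k$ in the half-house order is a prefix of length $k$ in the house order, this dominance descends to PS allocations, and the consecutivity argument of \thmref{th:2agents-eubr-in-P} guarantees that $p^*(1)$ is itself realised by some report; its uniqueness as an allocation is the uniqueness already noted there.

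\textbf{Step 2 (SD-maximality $\Rightarrow$ both best-response notions).} Because DL refines SD, $p^*(1)\succ_1^{SD} q(1)$ implies $p^*(1)\succ_1^{DL} q(1)$: if $h$ is the $\succ_1$-best house on which $p^*(1)$ and $q(1)$ differ, all $\succ_1$-better houses carry equal mass, so the SD-inequality at $h$ forces $p^*(1)(h)>q(1)(h)$. Hence $p^*(1)$ DL-dominates every other achievable allocation; by asymmetry of DL nothing DL-dominates $p^*(1)$, so a report is a DL best response iff it produces $p^*(1)$. Likewise, for any utility $u$ consistent with $\succ_1\colon h_1\succ\cdots\succ h_m$, Abel summation over an achievable allocation $x$ gives $\sum_{h\in H}u(h)\,x(h)=\tfrac{m}{2}\,u(h_m)+\sum_{k=1}^{m-1}\bigl(u(h_k)-u(h_{k+1})\bigr)P_k(x)$, where $P_k(x)$ is the mass $x$ places on $\{h_1,\dots,h_k\}$ and $\tfrac{m}{2}$ is the (report-independent) total mass agent~$1$ eats under PS. All coefficients $u(h_k)-u(h_{k+1})$ are strictly positive, so $p^*(1)\succ_1^{SD} q(1)$ implies $\sum_{h}u(h)p^*(1)(h)>\sum_{h}u(h)q(1)(h)$; thus $p^*(1)$ is the unique EU-maximiser among achievable allocations, and a report is an EU best response iff it produces $p^*(1)$.

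Combining Steps~1 and~2, both the set of DL best responses and the set of EU best responses equal $\{\succ_1' : PS(\succ_1',\succ_2)(1)=p^*(1)\}$, which proves the proposition. The only genuinely nontrivial point is Step~1 — pushing the Bouveret--Lang greedy optimum through the half-house reduction to obtain a \emph{unique SD-dominant} PS allocation — and this is exactly where the sequential-allocation connection established for \thmref{th:2agents-eubr-in-P} does the work; everything afterwards is the routine SD/DL/EU comparison above.
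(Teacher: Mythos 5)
Your argument is correct, but it takes a genuinely different route from the paper's. The paper's proof is much shorter and rests on two observations: (i) with two agents every PS allocation has entries in $\{0,1/2,1\}$, so the DL order on achievable allocations can be encoded as an EU order for a particular ``exponential'' utility function (each house worth a sufficiently large multiple of the next); and (ii) by \thmref{th:2agents-eubr-in-P} and the algorithm of \citet{BoLa14a}, the EU best response is the same for \emph{all} utilities consistent with $\succ_1$ --- in particular for the exponential one --- so EU best responses and DL best responses coincide. You instead pull the utility-independence of the Bouveret--Lang optimum back through the half-house reduction to exhibit a single achievable allocation $p^*(1)$ that SD-dominates every other achievable allocation, and then note that SD dominance simultaneously forces DL- and EU-optimality. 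Both proofs hinge on exactly the same external input (the sequential-allocation connection and the utility-independence of its optimum); what yours buys is that it never needs the $\{0,1/2,1\}$ structure or the exponential-utility encoding --- the paper's choice of base $2$ in fact needs a small mass-conservation argument to be airtight, which your SD route sidesteps --- and it yields the slightly stronger conclusion that there is an achievable allocation that is SD-, DL- and EU-optimal at once. The cost is that your Step 1 must revisit the consecutivity/target-set correspondence from \thmref{th:2agents-eubr-in-P} to confirm that weak dominance of prefix sums at the half-house level descends to SD dominance at the house level; you flag this, and the observation that house-level prefixes correspond to even-length half-house prefixes closes it, so I see no gap.
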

				\begin{proof}
					For two agents, PS assigns probabilities from the set $\{0,1/2,1\}$. Hence DL preferences can be represented by EU preferences where the utilities are exponential: the utility of a more preferred house is twice the utility of the next most preferred house. Hence a response is a DL best response if it is an EU best response for exponential utilities. 
					On the other hand we have shown that for two agents and the PS rule, an EU best response is the same for any utilities compatible with the preferences. Hence for two agents, an EU best response for any utilities is the same as the EU best response for exponential utilities which in turn is the same as a DL best response.
				\end{proof}
		
		% \section{Expected utility best response}
		
			\subsection{General case}
		
		We show that an EU best response is NP-hard to compute. The result contrasts with Theorem~\ref{th:dlbr-in-P} which states that a DL best response can be computed in polynomial time. % We specify the problem \textsc{EU-BR} as follows: given an assignment setting as well as a utility function $u:H\rightarrow \mathbb{N}$ specifying utility of each house for the manipulator (agent $1$), can the manipulator specify preferences such that the utility for his allocation under the PS rule is at least $k$?
%

		% Let us define the following decision problem.
	%
	%
		
		% 		\vspace{1em}
		%
		%
		% \noindent
		% %\textbf{Name}:
		% \textsc{EU-BR}  \\
		% \noindent
		% \textbf{Instance}:
		% Agent set $N$, a set of houses $H$, a preference profile $\succ$ of non-manipulators on those houses, a utility function $u:H\rightarrow \mathbb{N}$ specifying utility of each house for the manipulator and a target utility $k$ \\
		% \noindent
		% \textbf{Question}: Can the manipulator (agent $1^*$) specify preferences such that the utility for his allocation under the PS rule is at least $k$?
		
		% 
		\begin{theorem}\label{th:eubr-nphard}
			 \textsc{EU-BR} is NP-hard.
%			Checking whether there exists a better response is NP-complete.
		\end{theorem}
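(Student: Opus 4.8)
The plan is to prove NP-hardness of the decision version of \textsc{EU-BR} --- given $(N,H,\succ)$, utilities $u_1$ for agent $1$ and a rational $K$, decide whether agent $1$ has a report $\succ_1'$ with $\sum_{h\in H}u_1(h)\cdot PS(N,H,(\succ_1',\succ_{-1}))(1)(h)\ge K$ --- which at once implies that computing an EU best response is NP-hard. I would reduce from \textsc{Exact Cover by 3-Sets} (X3C): given a universe $U$ with $|U|=3q$ and a family $\mathcal{C}=\{C_1,\dots,C_\ell\}$ of $3$-element subsets of $U$, decide whether some subfamily partitions $U$ (necessarily into exactly $q$ sets). The reduction will use a number of agents growing with the instance, which is to be expected in view of \thmref{th:2agents-eubr-in-P}, and the hardness stands in deliberate contrast to \thmref{th:dlbr-in-P}.

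\emph{Construction.} For each set $C_j$ introduce a \emph{set-house} $g_j$ --- in fact several identical copies $g_j^{(1)},\dots,g_j^{(r)}$, all highly and equally valued by agent $1$ --- and for each element $e\in U$ an \emph{element-agent} $a_e$ together with a small block of auxiliary \emph{timing houses}; add enough \emph{dummy agents and houses} that the eating start times of all auxiliary houses are pinned down regardless of agent $1$'s report. The preferences of $a_e$ are engineered so that $a_e$ threatens exactly the set-houses $g_j$ with $e\in C_j$: unless agent $1$ reaches the copies of such a $g_j$ early enough, the agents $\{a_e : e\in C_j\}$ will already have consumed them, and then agent $1$ receives only a negligible fraction. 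The crucial calibration is that each element-agent, eating at unit speed, can ``shield'' agent $1$ on at most one set-house containing it within the relevant window, so that agent $1$ can simultaneously obtain full units of the set-houses of a subfamily $\mathcal{S}\subseteq\mathcal{C}$ if and only if the members of $\mathcal{S}$ are pairwise disjoint. Finally give all remaining houses fixed negligible utility and choose $K$ so that it is attained exactly when $q$ set-houses are secured in full.

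\emph{Correctness.} Forward direction: from an exact cover $\mathcal{C}'$, agent $1$ reports the copies of the $q$ set-houses of $\mathcal{C}'$ at the top of his list --- in the order forced by the timing gadget, recalling from \secref{sec:dl} that such reorderings matter because agent $1$'s ordering of earlier houses shifts the eating start times of later ones --- and then the remaining houses arbitrarily; the calibration hands him a full unit of each of the $q$ set-houses, so his expected utility is at least $K$. Converse: from any report with value at least $K$, read off the subfamily $\mathcal{S}$ of set-houses agent $1$ obtains fully; the utility bound forces $|\mathcal{S}|\ge q$, and the calibration forces the members of $\mathcal{S}$ to be pairwise disjoint, hence (as $|U|=3q$) an exact cover. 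The technical core is a lemma characterizing, for an arbitrary report of agent $1$, exactly which set-houses are fully allocated to him, purely from the combinatorial data of that report and the gadgets.

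\emph{Main obstacle.} The crux is controlling the PS eating dynamics against an adversarial report. As already observed in \secref{sec:dl}, agent $1$'s choices alter the eating start times of later houses, so the ``threat schedule'' the element-agents impose on the set-houses is itself a function of the report; the gadget must be robust enough that the only way to defeat the threats on $q$ distinct set-houses is the honest disjoint-packing strategy, and that no clever use of the timing and dummy houses lets agent $1$ reshuffle the schedule in his favour. Pinning this down needs an essentially worst-case analysis of the eating process: pad the instance with enough dummy agents and houses that every relevant eating start time is fixed up to the choices the reduction cares about, and check that the residual utility agent $1$ collects from non-set-houses is (at least boundedly) independent of his report, so that $K$ cleanly separates yes- from no-instances. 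The multiple-copies device for each set-house is what makes partial success on a set-house strictly worse than the packing bound, which is precisely what ties the continuous PS allocation back to the discrete exact-cover condition; getting that trade-off sharp enough is where most of the work lies.
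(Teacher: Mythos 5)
There is a genuine gap: what you have written is a reduction \emph{plan}, and the one step you defer --- ``pad the instance with enough dummy agents and houses that every relevant eating start time is pinned down regardless of agent $1$'s report'' --- is precisely the step that cannot be achieved by padding and that the paper's entire construction is built to handle. Under PS, agent $1$ is himself a consumer: whenever he joins the agents eating a house, that house is exhausted sooner, those agents advance to their next houses earlier, and the perturbation cascades through their subsequent eating start times. So the ``threat schedule'' that your element-agents impose on the set-houses is unavoidably a function of agent $1$'s report, and your correctness argument (the characterization lemma of which set-houses agent $1$ gets in full, the claim that each element-agent can shield at most one set-house, the claim that the residual utility from non-set-houses is report-independent) all rest on timing invariants you have not established. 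Relatedly, you give no mechanism that \emph{forces} an adversarial agent $1$ into the structured behaviour your analysis assumes; without one, he may interleave copies of set-houses, timing houses and dummy houses in ways that shift the element-agents' schedules and defeat the intended packing bound. The multiple-copies device is also unjustified: eating extra copies of $g_j$ costs agent $1$ time during which the threats on the other set-houses advance, and it is not shown that this makes partial success strictly worse rather than opening new strategies.

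For comparison, the paper reduces from a restricted 3SAT (each literal occurring exactly twice) and spends most of the proof on exactly this synchronization problem: utilities decay geometrically in $\alpha$ so that any deviation from the ``prescribed'' behaviour locks the manipulator into a low-utility house for a constant amount of time (\lemref{lemma:sametime} and its corollary), which forces his only real freedom to be the order of the two literal houses in each round (encoding a truth assignment); slowdown houses resynchronize him after each round; and $18$ parallel copies with dummy manipulators keep the non-manipulator agents partitioned into a constant number of groups whose schedules advance in lockstep, so that the only agents who fall out of sync are exactly the literal agents whose $\nicefrac{1}{9}$ advance or delay into the clause round determines the manipulator's share of the prize house. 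A correct X3C-based reduction is not ruled out, but it would need an analogue of all of this machinery; as it stands, your argument asserts the conclusions of that machinery without constructing it.
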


		% \begin{theorem}
		% PSMANIP is NP-complete
		%
		% \end{theorem}

		\begin{proof}
		% It is easy to show that the problem is in NP since we can guess the preferences and run the algorithm in polynomial time to check that we reach the target.
		To show hardness we show that the following problem is NP-complete: given an assignment setting as well as a utility function $u:H\rightarrow \mathbb{N}$ specifying the utility of each house for the manipulator (agent $1$) and a target utility $T$, can the manipulator specify preferences such that the utility for his allocation under the PS rule is at least $T$? We reduce from a restricted NP-hard version of 3SAT where each literal appears exactly twice in the formula. Given such a 3SAT instance $F=(X,C)$ where $X = \{x_1,\dots,x_{n}\}$ is the set of variables and $C$ the set of clauses, we build an instance of \textsc{EU-BR} where the manipulator can obtain utility $\geq T$ if and only if the formula is satisfiable.  At a high level, we will create an instance of the assignment problem which can be conceptualized as 18 (mostly) disjoint parts that we index by $D \in \{1, \ldots, 18\}$.  We will describe the main (first) part in detail and explain how it is duplicated to create the other 17 parts. Each of the 18 parts is divided up into $n$ choice rounds which we index from $1$ to $n$.  For each part there is an additional \emph{clause} round. The 18 parts are linked by a special set of houses which allow us to synchronize the timing of the manipulator with respect to all the other agents. 
		The set of agents is
		$N= \{1\} \cup \bigcup_{D=2}^{18} \{a^D_{\text{dummy}}\} \cup \bigcup_{D=1}^{18} A^D_{\text{literals}}$
		where
		the manipulator is Agent 1, 17 `dummy' manipulators for the 17 copies of the main part, and two agents for each positive and negative literal in the formula for each of the 18 parts, $A^D_{\text{literals}} = \{a^{1,D}_{x_i}, a^{2,D}_{x_i}, a^{1,D}_{\neg x_i}, a^{2,D}_{\neg x_i} : x_i\in X\}$.

		 % We have the following agents: the manipulator Agent 1, 17 dummy manipulators for the 17 copies of the main part $A_{\text{dummy}} = \{A_{\text{dummy}}^i: i \in \{1, \ldots, 17\}\}$ , and two agents for each positive and negative literal in the formula for each of the 18 parts $A_{\text{literals}} = \{a^1_{x_1}, a^2_{x_1}, a^1_{\neg x_1}, a^2_{\neg x_1}\}$.  Giving us the total agent set $A =   \{\text{Agent 1}\} \cup A_{\text{dummy}} \cup \{ \bigcup_{D=1}^{18} A^D_{\text{literals}} \}$.

		% We define the set of houses to be the union of several sets.
		% We create a set of slowdown houses that will be used to control the timing of the manipulator's decisions,
		% $H_{\text{slow}} = \{ h^r_s: r \in \{1, \ldots, n-1\} \}$;
		% we create a set of houses for each of $n$ rounds and associate these with the positive and negative literals in the formula
		% $H_{\text{rounds}} = \{h^r_{x_i}, h^r_{\neg x_i}: r \in \{1, \ldots, n\}, i \in \{1, \ldots, n\}\}$;
		% we create a triplet of houses corresponding to each clause in the formula
		% $H_{\text{clause}} = \{h^1_{c_i}, h^2_{c_i}, h^3_{c_i}: c \in \{1, \ldots, C\} \}$;
		% we create a prize house $\{h_{\text{prize}}\}$ and a set of consolation prize houses $H_{CP} = \{h_{cp} \}$ for the manipulator
		% and the dummy manipulators.  Hence the complete set of houses is
		% \[H =  H_{\text{slow}} \cup \bigcup_{D=1}^{18} H^D_{\text{rounds}} \cup \bigcup_{D=1}^{18} H^D_{\text{clause}} \cup \bigcup_{D=1}^{18} H^D_{CP} \cup \{h_{\text{prize}}\}\]
		%

		The set of houses is
				$H =  H_{\text{slow}} \cup \bigcup_{D=1}^{18} H^D_{\text{rounds}} \cup \bigcup_{D=1}^{18} H^D_{\text{clause}} \cup \bigcup_{D=2}^{18}\{h^D_{CP}\} \cup \{h_{\text{prize}}\}$ where 
		$H_{\text{slow}} = \{ h^r_s: r \in \{1, \ldots, n-1\} \}$ is the set of slowdown houses that will be used to control the timing of the manipulator's decisions. Note that there is only one slowdown house per round and these houses are shared between all $18$ parts. $H^D_{\text{rounds}} = \{h^{r,D}_{x_i}, h^{r,D}_{\neg x_i}: r \in \{1, \ldots, n\}, i \in \{1, \ldots, n\}\}$ is a set of houses consisting of one house for each positive and negative literal in the formula for each of the $n$ rounds; $H_{\text{clause}} = \{h^{1,D}_{c}, h^{2,D}_{c}, h^{3,D}_{c}: c \in \{1, \ldots, C\} \}$ is a triplet of houses for each clause in the formula; $h_{\text{prize}}$ is the prize house for the manipulator;  and $\bigcup_{D=2}^{18}\{h^D_{CP}\}$ is the set of consolation prize houses for the dummy manipulators.

We will describe how to construct the preferences for the main part which contains the manipulator, agent 1, and then explain the small differences necessary to create the 17 other duplicate instances.  Example~\ref{example:sat} gives an illustration of the main part of a small instance and may be helpful for reference during the discussion.

		\paragraph{Main part}
		
		We will describe the rounds by declaring which houses are eaten in them and show how the preference lists of the agents are constructed. Each agent's preference list can be described has having a \emph{head} and a \emph{tail}. To ease the description, we will omit the round index $D=1$ in the variable names. Intuitively, the head consists of the houses that the agent will consume during the running of the PS algorithm while the tail consists of houses that will not be eaten. When we describe how we add houses to an agent's preference list, we will say \emph{append the house(s) to the head} to mean add this set of houses to the end of the head of the preference list, behind those that have been placed before.  We say \emph{append the house(s) to the tail} of the preferences to mean place them last amongst all houses which have been placed in the preferences so far.
		
	In each choice round $r$, houses $h^r_{x_i}$ and $h^r_{\neg{x_i}}$ for each $i \in \{1, \ldots, n\}$ will be eaten. Append those houses to the head of the preferences of the agents corresponding to the same literal and append them to the tail of the preferences of agents associated to a different literal. Append houses $h^r_{x_r}$ and $h^r_{\neg{x_r}}$ to the head of the manipulator's preferences (the order in which we add them in is not important). Houses $h^r_{x_i}$ and $h^r_{\neg{x_i}}$ where $i \neq r$ are appended to the tail of the manipulator's preferences. In each choice round except the last one, slowdown house $h^r_s$ will be eaten. We append it to the tail of the preferences of the literal agents, and to the head of the preferences of the manipulator agent (after the literal houses we added for this round).
		
		Finally we describe the clause round. For each clause, we have the 3 houses $h^1_c, h^2_c, h^3_c$. We append these 3 houses to the head of the preferences of exactly 1 agent corresponding to the negation of each of the clause $c$'s literals. If an agent has already had houses added to his preferences in the clause round, we add them to the other agent corresponding to the same literal (since a literal appears only twice in the formula, this ensures each agent has only one triplet of houses appended to the head of their preferences).
		The prize house $h_{\text{prize}}$ is appended to the head of both the manipulator's and the literal agents' preferences (after the clause houses we just added to the literal agents).
		
		\paragraph{Duplicate parts} For each of the duplicate parts, $D\in \{2, \ldots, 18\}$, we will describe the necessary modifications. For clarity we call the copy of the prize house in the duplicated parts of the instance consolation prize houses denoted $h^D_{CP}$ for each $D \in \{2, \ldots, 18\}$. 
Recall that the set of slowdown houses $H_{\text{slow}}$ is shared between all the parts; thus all the parallel constructions `merge' at the set of slowdown houses. We are left with the fact that houses from a given duplicate part $D$ of the instance have not been added to the preferences of agents from all other parts of the instance. We can append all these houses to the tail of the preferences of the agents outside this part in any order.

		\paragraph{The manipulator's utilities} We will give the manipulator's utility in terms of a number $\alpha$ to be fixed later. 
		The prize house has utility 1. The literal houses that are appended to the head of the manipulator's preferences during round $i$ ($h^i_{x_i}$ and $h^i_{\neg{x_i}}$) have utility $(2\alpha)^{2(n-i)}$ and $(2\alpha)^{2(n-i)}+\epsilon$ where $\epsilon$ is $O(\frac{1}{2^{n}})$. 
The slowdown houses have utility $(2\alpha)^{2(n-i-1)+1}$. All other houses have negligible utility. By negligible we mean that adding up all their combined utilities will yield less than $\frac{1}{\alpha}$ utility. This can be done since we have a polynomial number of houses and we can make the utilities exponentially small. 

		Based on these utilities we can now derive a target value for $T$ and analyse the behaviour that the manipulator must have to reach that target.
		The manipulator may only start eating a new house once the house he is currently eating is no longer available. This means that if he starts eating a house, he is `stuck' eating said house for a certain amount of time.
		We now constrain the manipulator's possibilities by showing that by diverging from the literal and slowdown houses he should be eating according to his preferences, he will commit to a house for which he has exponentially less utility for an amount of time which is at least some constant.
		By setting $\alpha$ to be large enough, we can ensure that this loss in utility is irrecuperable. We say the manipulator \emph{behaves as prescribed} if he declares preferences which correspond to his true preferences up to permutations of the literals associated with one same variable.
		
		Let $t_1>0$ be the smallest amount of time the manipulator will eat a new house if he has behaved as prescribed in all his previous choices. The next lemma shows that $t_1$ is independent of the instance size.
		
		% \textbf{Haris says : Shouldn't $t_1$ be defined informally before being used in a lemma?}

		\begin{lemma}\label{lemma:sametime}
		$t_1 \in O(1)$.
		\end{lemma}
		\begin{proof}
		As the algorithm progresses, we may group the agents in a constant number of groups based on the extent they have eaten their current house when the manipulator finishes consuming one of his houses and the number of agents eating that house. Each group is associated with a value, which corresponds to the amount of time the manipulator would have to spend if he decided to eat a house currently being eaten by members of that group. By showing that the number of these groups is constant, and therefore so is the number of values, we show that $t_1$ is a constant.
		The groups can be characterized by the type of house that the members are eating. At any point in the algorithm we say that a literal has been chosen by the manipulator if the round $r$ is greater than the index $i$ of that literal, $r > i$. We say that a literal is untouched by the manipulator for $i> r$.  The groups are defined as follows:
		% \begin{enumerate}
		% \item Houses being eaten by an agent corresponding to a literal which has been chosen by the manipulator
		% \item Houses being eaten by an agent corresponding to a literal which is the negation of one chosen by the manipulator
		% \item Houses corresponding to literals untouched by the manipulator
		% \item Houses being eaten by duplicate manipulators.
		%
		% \end{enumerate}
		%
		% \begin{enumerate}
		\begin{inparaenum}[1)] 
		\item Agents eating houses being eaten by an agent corresponding to a literal which has been chosen by the manipulator
		\item Agents eating houses being eaten by an agent corresponding to a literal which is the negation of one chosen by the manipulator
		\item Agents eating houses corresponding to literals untouched by the manipulator
		\item Agents eating houses being eaten by dummy manipulators.
		\end{inparaenum}
		% \end{enumerate}

		At the start of any round $i$, eating a house from group $j$ would take $g^j_1$ time. The manipulator then finishes eating the first literal, and eating a house from group $j$ would take $g^j_2$ time. After eating the second literal, eating a house from group $j$ would take $g^j_3$ time. Finally the manipulator eats the slowdown houses and we have corresponding value $g^j_4$. We will now show that the values for $g^j_l$ are the same for all rounds. To show this we simply need to make sure that all the agents stay `synchronised'. It takes the manipulator 0.5 units of time to finish the current round ($\frac{1}{3}$ on the first literal, $\frac{1}{9}$ on the second, and $\frac{1}{18}$ on the slowdown house).
		Let us now show that it also takes 0.5 units of time for every other group to get to the same point in the next round. The exception are the agents eating a house that is also being eaten by the manipulator or some dummy in that round, which fall out of sync with their previous group (group $3$ or $4$) and transit either to group $1$ or $2$. 
For groups 1-3, all these agents pair up and have $1$ house per round. It therefore takes them each 0.5 time to eat it. 
For Group 4, the dummy manipulators eat a first literal ($\frac{1}{3}$) then a second ($\frac{1}{9}$) and finally all 18 manipulators  join together and eat the slowdown houses in the round, which takes them time $\frac{1}{18}$. This adds up to $\frac{9}{18}=0.5$.\end{proof}

		\begin{corollary}
		There is value for $\alpha \in O(1)$ such that the manipulator behaves as prescribed.
		\end{corollary}
		% \begin{proof}
% 		Using the previous lemma we know that deviating from the prescribed behaviour at any time will require the manipulator to commit to eating a house for time at least $t_1$. Since eating said house for time $t_1$ implies that even if he straight away returns to eating his prescribed house some of his share will already have been consumed by the other agents eating it, he will suffer a loss of utility on that house $f(\alpha, t_1)$ where $f$ is a function dependent on $\alpha$ and $t_1$. By making $\alpha$ large enough, we can ensure that it is impossible for the manipulator to make up for this loss no matter what impact the deviation has on subsequent allocations.
% 		\end{proof}

		\begin{lemma}
		In the clause round all agents corresponding to literals chosen by the manipulator start the round at the same time as the manipulator, whilst agents corresponding to negation of the choice of the manipulator are in advance and start the round $\frac{1}{9}$ units of time before the manipulator.
		\end{lemma}
		\begin{proof}
		In Lemma~\ref{lemma:sametime} we argued that the agents took the same amount of time to finish a round. The exception to this is the last round where the manipulator does not eat any slowdown houses and therefore finishes the round at the same time as group $1$. Group $2$ finishes the round $\frac{1}{9}$ before group $1$ since the manipulator spent $\frac{1}{3}$ time eating a house with them whereas he spent $\frac{1}{9}$ time eating a house with agents from group 2. This results in a $\frac{4}{9}-\frac{3}{9}=\frac{1}{9}$ delay between the two.
		\end{proof}

		% \begin{proof}
		% At the start of the clause round, we have a new set of $g$ numbers which are independent on the number of rounds played previously and the number of agents associated to variables (hence independent on the size of the Boolean formula). Since there is only one clause round, and it is over when the manipulator eats the prize house, we need only argue that  $\alpha$ is big enough to ensure that if he does not start consuming the prize house straight away he will commit too much time to some other house and not be able to recuperate the utility he lost.
		% \end{proof}

		The manipulator's choice corresponds to an assignment of the variables in the SAT formula. If the manipulator chose to eat house $h^r_{x_r}$ before $h^r_{\neg{x_r}}$ then this corresponds to setting $x_r$ to true (and vice versa).  Thus, in each round the manipulator choses an assignment for a variable in the formula.
%TODO: define the term `` manipulator can make a choice corresponding to an assignment''
	 	The target utility $T$ is the sum of $\frac{4}{9}$ of the utility of $h^r_{x_r}$ and$\frac{1}{18}$ of the utility of the slowdown house $h^r_s$ (except in the last round) for each round $r$ and an extra $\frac{25}{27}$.
	 	
		\begin{lemma}
		In the clause round, the manipulator must eat the prize house before any other agent to reach the target utility $T$.
		\end{lemma}
		
		\begin{lemma}
	$F$ is satisfiable iff the manipulator can reach the target utility $T$.
		\end{lemma}
		\begin{proof}
		($\Rightarrow$) We have set $T$ so that if the manipulator declared a prescribed preference profile, he will require an extra $\frac{25}{27}-\epsilon\cdot n$ utility to reach $T$. If all clauses are satisfied, at most 2 of the agents eating the houses corresponding to a clause will be in advance and the manipulator will have $\frac{25}{27}$ units of time to eat the prize house alone.  The manipulator will always have $\frac{8}{9}$ units of time to eat the prize house alone while the other literal agents are eating the corresponding clause houses. In the worst case, 2 agents are in advance for any clause by $\frac{1}{9}$, units of time, which means that they, along with the third agent in the clause, will finish their triplet of clause houses after $\frac{8}{9} + \frac{1}{27}$ units of time, at which time all three agents will begin eating the prize house.  This leaves the manipulator to eat alone for $\frac{1}{27}$ extra time thus ensuring him extra utility $\geq \frac{25}{27}$.
		\\
		($\Leftarrow$) If the truth assignment causes a clause to be unsatisfied, the agents corresponding to the negation of the literal in the clause (and therefore eating the clause houses corresponding to the clause) will all be in advance and will finish eating the clause houses before the manipulator has eaten $\frac{25}{27}$ of the prize house. If all 3 agents are in advance, they will finish eating the clause houses $\frac{24}{27}$ units of time after the manipulator has started eating the prize house. Therefore for $\frac{3}{27}$ of the prize house there are at least 3 extra agents eating the prize house. Since this makes at least 4 agents eating $\frac{3}{27}$ of the prize house, the manipulator will get at most $\frac{1}{36}$ instead of the required $\frac{1}{27}$ of the prize house after he has eaten a share of $\frac{24}{27}$. Since the prize house is the only remaining house with non-negligible utility, and we have made $\alpha$ large enough, he cannot compensate this loss of utility by getting more of some other house.
		\end{proof}
		\end{proof}

The reduction can be used to show that even checking whether there exists any report that yields more utility than the truthful report is NP-hard. 

 \begin{example}	\label{example:sat}
	 We illustrate the reduction in the proof of Theorem~\ref{th:eubr-nphard}.
 For the following SAT formula, the table below  illustrates the preference profile for the agents in the main part. Houses not shown in the preferences are never eaten by the agents and come later in the preference lists.
 		\begin{equation*}\small \underbrace{(x_1\vee x_2\vee x_3)}_{c_1}
		\underbrace{(\neg x_1\vee \neg x_2\vee \neg x_3)}_{c_2}
			\underbrace{(x_1\vee \neg x_2\vee x_3)}_{c_3}
			\underbrace{(\neg x_1\vee x_2\vee \neg x_3)}_{c_4}
		\end{equation*}
	 	\end{example}
		
		\vspace{-1em}
	 		\begin{table}[h!]
				\centering
				\scalebox{0.65}{
	 		\begin{tabular}{ |l|l|l|l|l|l|l|l|l|l|l|l|l|l|l|l|l|l|l|l|}
	 		  \hline
			  	 	&choice round 1&&choice round 2&&choice round 3&clause round&\\  \hline
	 		   % &$\nicefrac{1}{3}, \nicefrac{1}{9}$ & $\nicefrac{1}{18}$&$\nicefrac{1}{3}, \nicefrac{1}{9}$ &$\nicefrac{1}{18}$ &$\nicefrac{1}{3}, \nicefrac{1}{9}$ & $\nicefrac{25}{27}$& \\ \hline
	 	$1$&$h_{x_1}^1,h_{\neg x_1}^1$&$h_s^1$&$h_{x_2}^2,h_{\neg x_2}^2$&$h_s^2$&$h_{x_3}^3,h_{\neg x_3}^3$&$h_{\text{prize}}$&\\  \hline
	 	$a_{x_1}^1$&$h_{x_1}^1$&& $h_{x_1}^2$&& $h_{x_1}^3$&$h_{c_2}^1,h_{c_2}^2,h_{c_2}^3$&$h_{\text{prize}}$\\
	 	$a_{x_1}^2$&$h_{x_1}^1$& & $h_{x_1}^2$& & $h_{x_1}^3$&$h_{c_4}^1,h_{c_4}^2,h_{c_4}^3$&$h_{\text{prize}}$\\   \hline
	 	$a_{\neg x_1}^1$&$h_{\neg x_1}^1$&& $h_{\neg x_1}^2$&& $h_{\neg x_1}^3$&$h_{c_1}^1,h_{c_1}^2,h_{c_1}^3$&$h_{\text{prize}}$\\
	     $a_{\neg x_1}^2$&$h_{\neg x_1}^1$&& $h_{\neg x_1}^2$&& $h_{\neg x_1}^3$&$h_{c_3}^1,h_{c_3}^2,h_{c_3}^3$&$h_{\text{prize}}$\\   \hline
	 	$a_{x_2}^1$&$h_{x_2}^1$&& $h_{x_2}^2$&& $h_{x_2}^3$&$h_{c_2}^1,h_{c_2}^2,h_{c_2}^3$&$h_{\text{prize}}$\\
	 	$a_{x_2}^2$&$h_{x_2}^1$&& $h_{x_2}^2$&& $h_{x_2}^3$&$h_{c_3}^1,h_{c_3}^2,h_{c_3}^3$&$h_{\text{prize}}$\\   \hline
	 	$a_{\neg x_2}^1$&$h_{\neg x_2}^1$&& $h_{\neg x_2}^2$&& $h_{\neg x_2}^3$&$h_{c_1}^1,h_{c_1}^2,h_{c_1}^3$&$h_{\text{prize}}$\\
	     $a_{\neg x_2}^2$&$h_{\neg x_2}^1$&& $h_{\neg x_2}^2$&& $h_{\neg x_2}^3$&$h_{c_4}^1,h_{c_4}^2,h_{c_4}^3$&$h_{\text{prize}}$\\   \hline
	 	$a_{x_3}^1$&$h_{x_3}^1$&& $h_{x_3}^2$&& $h_{x_3}^3$&$h_{c_2}^1,h_{c_2}^2,h_{c_2}^3$&$h_{\text{prize}}$\\
	 	$a_{x_3}^2$&$h_{x_3}^1$&& $h_{x_3}^2$&& $h_{x_3}^3$&$h_{c_4}^1,h_{c_4}^2,h_{c_4}^3$&$h_{\text{prize}}$\\   \hline
	 	$a_{\neg x_3}^1$&$h_{\neg x_3}^1$&& $h_{\neg x_3}^2$&& $h_{\neg x_3}^3$&$h_{c_1}^1,h_{c_1}^2,h_{c_1}^3$&$h_{\text{prize}}$\\
	     $a_{\neg x_3}^2$&$h_{\neg x_3}^1$&& $h_{\neg x_3}^2$&& $h_{\neg x_3}^3$&$h_{c_3}^1,h_{c_3}^2,h_{c_3}^3$&$h_{\text{prize}}$\\
	 		  \hline
	 		\end{tabular}
			}
			% \caption{Preference profile for the agents in the main part of the gadget for the SAT formula in Example~\ref{example:sat}. Houses not shown in the preferences are never eaten by the agents and come later in the preference lists.}
			\label{table:psreduction}
	 		\end{table}

		\section{Conclusions}

% 			\begin{table}[t!]
% 		%	\small
% 			\centering
% 				\scalebox{1}{
% 		\centering
% 		%\scriptsize
% 		\begin{tabular}{lcc}
% 		\toprule
% 		&Complexity&Reference\\
%
% 		\midrule
% 		DL&in P&Theorem~\ref{th:dlbr-in-P}\\
% EU, 2 agents &in P&Theorem~\ref{th:2agents-eubr-in-P}\\
% EU&NP-complete&Theorem~\ref{th:eubr-nphard}\\
% 		% &Envy-free in expectation&\\
% 		% &Proportional in expectation&\\
% 		\bottomrule
% 		\end{tabular}
% 		}
% 		\caption{Complexity of computing best response under PS rule }
% 		\label{table:summary:cake}
% 		\end{table}

		We conducted a detailed computational analysis of strategic aspects of the PS rule. 
		% We showed that whereas an expected utility best response is NP-hard to compute, a DL best response can be computed in polynomial time for any number of agents, and an EU best response can be computed in polynomial time for the case of two agents.
		Since PS performs better than RSD in terms of efficiency and envy-freeness, the only drawback it has in comparison with RSD is its manipulability. We have shown that although PS is manipulable, finding an optimal manipulation is a complex task for an agent even if he has complete knowledge about the preferences of other agents. 
%Our study leads to a number of new research directions. 
There is scope to conduct detailed experiments on the pecentage of instances that are manipulable and the extent and effects of manipulation. Initial experiments show that manipulation is often possible and more often decreases social welfare than increases it, though the overall effect is small. As the number of houses relative to the number of agents grows, the opportunities to manipulate increase, maximizing around 99\%.
It will be interesting to extend our results to the extension of PS for indifferences~\citep{KaSe06a}.
 % The NP-hardness for \textsc{EU-BR} raises the question whether there is a PTAS for the optimal response.
%		 Two problems are still open. Firstly, what is the complexity of computing an expected utility best response for more than two agents? The problem is particularly intriguing because even for the related and conceptually simpler setting of discrete allocation, computing an expected utility  best response for more than two agents remains an open problem~\citep{BoLa11a}.
%	Another problem is the complexity of computing a Nash equilibrium for more than two agents. 	%We have studied the problem of computing a best response of a single agent. 
	% It will also be interesting to examine coalitional manipulations and coalitional Nash equilibria.
	% 	Finally, a deeper analysis of Nash dynamics under the PS rule is an intriguing research problem.
	Finally, studying coalitional manipulations and a deeper analysis of Nash dynamics are other interesting directions.

		%Is there ever incentive to misreport by reporting ties when none are present?
		 %Initial analysis shows that EU and DL best responses can cycle.

		%\section{Acknowledgements}

		% A related issue is that of existence and computation of \emph{coalitional} Nash equilibrium.  Our argument for the existence of a Nash equilibrium does not naturally extend to that for coalitional Nash equilibrium. It will be interesting to check whether a preference profile exists that is in coalitional Nash equilibrium.

 % \bibliography{../../../research/papers/pamas/abb,../../../research/papers/pamas/pamas.bib,../../../research/papers/pamas/aziz.bib}

\subsubsection*{Acknowledgments}
% \paragraph{Acknowledgment}

NICTA is funded by the Australian Government through the Department of Communications and the Australian Research Council through the ICT Centre of Excellence Program. Serge Gaspers is the recipient of an Australian Research Council Discovery Early Career Researcher Award (project number DE120101761).
 
 \balance

\end{document}